
\NeedsTeXFormat{LaTeX2e}

\documentclass[manyauthors]{fundam}

\usepackage{mathptmx}

\usepackage[linesnumbered,ruled,vlined]{algorithm2e}
\usepackage{amssymb}
\usepackage{amsmath}
\usepackage{graphicx}
\usepackage{multirow}
\usepackage{url}

\usepackage{xcolor}
\usepackage{threeparttable}
\usepackage{multirow}



\newcommand{\lpmln}{LP\textsuperscript{MLN}{ }}
\newcommand{\lpmlnend}{LP\textsuperscript{MLN}}

\newcommand{\lglred}[2]{\left(\overline{{#1}_{#2}}\right)^{#2}}

\newcommand{\lse}[1]{HT^m(#1)}
\newcommand{\lsm}[1]{SM^m(#1)}
\newcommand{\asse}[1]{HT^a(#1)}
\newcommand{\assm}[1]{SM^a(#1)}
\newcommand{\kmn}{$k$-$m$-$n$ {}}

\newcommand{\hr}[1]{h(#1)}
\newcommand{\pbr}[1]{pb(#1)}
\newcommand{\nbr}[1]{nb(#1)}

\newcommand{\semise}{\equiv_{s,s}}
\newcommand{\semiseq}{\equiv_{s,s}}
\newcommand{\aspseq}{\equiv_{s,a}}

\newcommand{\srep}[1]{#1^\pm}
\newcommand{\sadd}[1]{#1^+}
\newcommand{\sdel}[1]{#1^-}
\newcommand{\sex}[1]{#1^\oplus}
\newcommand{\srd}[1]{#1^\ominus}

\newcommand{\opsrep}[1]{\Gamma^\pm(#1)}
\newcommand{\opsadd}[1]{\Gamma^+(#1)}
\newcommand{\opsdel}[1]{\Gamma^-(#1)}
\newcommand{\opsex}[1]{\Gamma^\oplus(#1)}
\newcommand{\opsrd}[1]{\Gamma^\ominus(#1)}

\newcommand{\seintadd}[2]{(\sadd{#1}, \sadd{#2})}

\newcommand{\is}[1]{IS(#1)}
\newcommand{\nis}[1]{IS_n(#1)}
\newcommand{\eis}[1]{IS_e(#1)}
\newcommand{\sis}[1]{IS_s(#1)}

\newcommand{\case}[1]{\textbf{Case #1.}{ }}

\newcommand{\setsubeq}[2]{#1 \subseteq #2}

\newcommand{\setdjemp}[2]{#1 \cap #2 = \emptyset}

\newcommand{\expneis}[1]{(#1 \neq \emptyset)}
\newcommand{\expeis}[1]{(#1 = \emptyset)}

\begin{document}


\title{A Syntactic Approach to Studying Strongly Equivalent Logic Programs}

\author{Zhizheng Zhang \thanks{We are grateful to the anonymous referees for their useful comments on the earlier version of this paper. 
		The work was supported by the National Key Research and Development Plan of China (Grant No.2017YFB1002801).} \\
	School of Computer Science and Engineering, \\ Southeast University, China \\
	seu\_zzz{@}seu.edu.cn
	\and Shutao Zhang \corresponding \\ 
	School of Computer Science and Engineering, \\ Southeast University, China \\
	shutao\_zhang{@}seu.edu.cn
	\and Yanghe Feng  \\
	National University of Defense Technology, \\ College of Systems Engineering, Changsha, China \\
	fengyanghe@nudt.edu.cn
	\and Bin Wang \\
	School of Computer Science and Engineering, \\ Southeast University, China \\
	kse.wang{@}seu.edu.cn
}
\address{School of Computer Science and Engineering, Southeast University, No.2 Dongnandaxue Road, Nanjing, Jiangsu Province, China}



\setcounter{page}{1}
\publyear{2021}
\papernumber{2060}
\volume{182}
\issue{1}


\maketitle

\runninghead{Z. Zhang, S. Zhang, Y. Feng, B. Wang}{A Syntactic Approach to Studying Strongly Equivalent Logic Programs}


\begin{abstract}
In the field of Answer Set Programming (ASP), 
two logic programs are strongly equivalent if they are ordinarily equivalent under any extensions. 
This property provides a theoretical foundation for studying many aspects of logic programs such as program simplification and transformation etc.
Therefore, strong equivalence has been investigated extensively for ASP and its extensions such as \lpmlnend. 
In this paper, we present a syntactic approach to studying the strong equivalence of logic programs, 
which provides several interesting results and would help us understand the strong equivalence from a new perspective. 
Firstly, we present the notions of independent sets and five kinds of syntactic transformations (S-* transformations) for logic programs. 
And we investigate the strong equivalence (SE) and non-strong equivalence (NSE) preserving properties of the S-* transformations in the contexts of ASP and \lpmlnend. 
Secondly, based on the properties of the S-* transformations, 
we present a fully automatic algorithm to discover syntactic conditions that preserve strong equivalences (SE-conditions) of ASP and \lpmln programs. 
To discover the SE-conditions efficiently,  we present four kinds of approaches to improve the algorithm. 
Thirdly, we present a preliminary method to simplify the discovered SE-conditions and report the simplified SE-conditions of several kinds of \lpmln  programs. 
After that, we present a discussion on the discovered SE-conditions and some existing problems. 
Finally, we present a comparison between SE-conditions discovering approaches in this paper and in the related work.
\end{abstract}

\begin{keywords}
\lpmlnend, Strong Equivalence, Syntactic Condition. 
\end{keywords}



\section{Introduction}

In the field of Answer Set Programming (ASP) \cite{Gelfond1988theSM}, 
the notions of strong equivalences have been extensively investigated for ASP and its extensions, 
since these notions provide a theoretical foundation for studying many aspects of logic programs such as program simplification and transformation \cite{Eiter2004SimplifyingLP,Ji2015SimplifyASP,Cabalar2007Propositional}.
Roughly speaking, two ASP programs $P$ and $Q$ are strongly equivalent, if for any ASP program $R$, 
the extended programs $P \cup R$ and $Q \cup R$ have the same stable models, 
which means the programs $P$ and $Q$ can be replaced each other without considering its context $R$. 
To check the strong equivalence of ASP programs, a model-theoretical approach was presented \cite{Lifschitz2001Strongly,Turner2001SE}, 
i.e. two ASP programs are strongly equivalent iff they have the same models in the logic of Here-and-There (HT-models). 

Besides the model-theoretical approach, several syntactic strong equivalence conditions (which is called SE-conditions for short) are presented to check the strong equivalence of some classes of ASP programs. 
For example, the TAUT and CONTRA conditions guarantee the strong equivalence between an ASP rule and the empty program \cite{Osorio2001Equivalence,Lin2005Discover}.  
The NONMIN, WGPPE, S-HYP, and S-IMP conditions can be used to check the strong equivalence of ASP programs $P$ and $P \cup \{r\}$ \cite{Wang2005SIMP,Wong2008SE}, 
where $r$ is an ASP rule. 
Usually, an SE-condition can only guarantee the strong equivalence of a kind of logic programs. 
To study the SE-conditions of arbitrary ASP programs, 
Lin and Chen \cite{Lin2005Discover} present a computer-aided approach to discovering SE-conditions of \kmn problems for ASP. 
The \kmn problems are to find the SE-conditions of the ASP programs $K \cup M$ and $K \cup N$, 
where $K$, $M$, $N$ are pairwise disjoint ASP programs containing $k$, $m$, and $n$ rules respectively.
Specifically, 
Lin and Chen's approach conjectures a candidate syntactic condition and verifies whether the condition is an SE-condition, 
where part of the verification can be done automatically, and other steps in the approach have to be done manually. 
By the approach, Lin and Chen discover the SE-conditions of several small \kmn problems in ASP. 
An application of the SE-conditions is to simplify logic programs, 
Eiter et al. \cite{Eiter2004SimplifyingLP} present an approach to simplifying ASP program via using existing SE-conditions. 
To study the SE-conditions based simplifications of other logic formalisms, 
we need to study the SE-conditions of these logic formalisms. 
In this paper, we study the SE-conditions of \lpmln programs.


\lpmln \cite{Lee2016Weighted} is a logic formalism that handles probabilistic inference and non-monotonic reasoning by combining ASP and  Markov Logic Networks (MLN) \cite{Richardson2006mln}. 
Recently, Lee and Luo \cite{Lee2019LPMLNSE} and Wang et al. \cite{Wang2019LPMLNSE} have investigated the notions of strong equivalences for \lpmln programs, respectively. 
In their work, several theoretical results including concepts of strong equivalences and corresponding model-theoretical characterizations are presented. 
Among these results, we specially focus on the notion of semi-strong equivalence (i.e. the structural equivalence in Lee and Luo's work), 
due to the notion is a basis for investigating other kinds of strong equivalences of \lpmlnend. 
Similar to the strong equivalence in ASP, semi-strongly equivalent \lpmln programs have the same stable models under any extensions, 
and it can be characterized by HT-models in the sense of \lpmlnend. 
However, the SE-conditions of \lpmln programs have not been investigated systematically. 
A possible way to study the SE-conditions of \lpmln programs is to adapt Lin and Chen's approach to \lpmlnend. 
But in Lin and Chen's approach, the conjecture and part of the verification have to be done manually, 
which makes it nontrivial to find the SE-conditions.

In this paper, we present a novel framework to studying the strong equivalences of logic programs. 
Based on the framework, we present a fully automatic approach to discovering the SE-conditions of logic programs. 
Note that a logic program is either an ASP program or an \lpmln program throughout the paper. 
Our main work is divided into four aspects.

Firstly, we present the notions of independent sets and S-* transformations. 
An independent set of logic programs is a special kind of set of atoms occurred in the programs. 
We show that there is a one-to-one mapping from logic programs and their independent sets, 
which means the programs can be transformed by changing their independent sets. 
Following the idea, 
we present five kinds of syntactic transformations (S-* transformations) 
that can transform logic programs by replacing, adding, and deleting atoms in corresponding independent sets. 
And we investigate the properties of S-* transformations w.r.t. strong equivalences, 
i.e., whether S-* transformations preserve strong equivalences and non-strong equivalences of logic programs, called the SE-preserving and NSE-preserving properties.  

Secondly, we present a fully automatic approach to discovering SE-conditions of logic programs. 
Based on the SE-preserving and NSE-preserving properties of S-* transformations, 
we present the notion of independent set conditions (IS-conditions) and 
show that an IS-condition is an SE-condition if the logic programs constructed from the IS-condition are strongly equivalent. 
According to the results, we present a basic algorithm to discover SE-conditions of \kmn problems through automatically enumerating and verifying their IS-conditions. 
After that, we present four kinds of methods to improve the basic algorithm. 
Experiment results show that these algorithms are efficient to discover SE-conditions of several small \kmn problems in \lpmlnend.

Thirdly, we report the discovered SE-conditions of some \kmn problems of \lpmlnend. 
Since there are too many SE-conditions discovered by the algorithms, 
we present a preliminary method to simplify the SE-conditions and report the simplified SE-conditions. 
Then, we present a discussion w.r.t. the discovered SE-conditions from two aspects. 
Firstly, we report two interesting facts w.r.t. the discovered SE-conditions and discuss potential application and possible theoretical foundation of the facts. 
Secondly, we present some problems on the simplifications of SE-conditions.

Finally, we present a comparison between Lin and Chen's and our approaches, 
which shows the advantage of our approach. 
Our contributions of the paper are two-fold. 
On the one hand, we develop a fully automatic approach to finding SE-conditions for \lpmln and ASP, 
which provides a new example for machine theorem discovering \cite{Lin2018MTD}.
On the other hand, the notions of independent sets and S-* transformations provide a new perspective to understand the strong equivalences of logic programs. 
Especially, two interesting facts reported in this paper imply some unknown theoretical properties of strongly equivalent logic programs.



\section{Preliminaries}
In this section, we review the syntax, semantics, and strong equivalences of ASP and \lpmln programs. 

\label{sec:preliminaries}

\subsection{Syntax}
An ASP program is a finite set of rules of the form 
\begin{equation}
\label{eq:asp-rule-form}
h_1 ~ \vee ~ ... ~\vee ~ h_k ~\leftarrow~ b_{1}, ..., ~b_m, ~not~ c_{1}, ...,~not ~ c_n.
\end{equation}
where $h$s, $b$s, and $c$s are atoms, 
$\vee$ is epistemic disjunction, and $not$ is default negation. 
For an ASP rule $r$, we use $\hr{r}$, $\pbr{r}$, and $\nbr{r}$ to denote the sets of atoms occurred in head, positive body, and negative body of $r$, respectively, 
i.e., $\hr{r} = \{h_1, \dots, h_k\}$, $\pbr{r} = \{b_1, \ldots, b_m\}$, and $\nbr{r} = \{c_1, \ldots, c_n\}$. 
By $at(r)$, we denote the set of atoms occurred in rule $r$, i.e., $at(r) = \hr{r} \cup \pbr{r} \cup \nbr{r}$; 
and by $at(P)$, we denote the set of atoms occurred in a program $P$. 
, i.e., $at(P) = \cup_{r \in P} ~at(r)$.
Based on above notations, a rule $r$ of the form (\ref{eq:asp-rule-form}) can be abbreviated as 
\begin{equation}
\label{eq:asp-rule-form-abbr}
\hr{r} \leftarrow \pbr{r}, ~not ~ \nbr{r}.
\end{equation}
An ASP program is called \textit{ground}, if it contains no variables. 
Usually, a non-ground logic program is considered as a shorthand for the corresponding ground program, 
therefore, we only consider ground logic programs in this paper. 

An \lpmln program is a finite set of weighted ASP rules $w:r$, 
where $r$ is an ASP rule of the form \eqref{eq:asp-rule-form}, and $w$ is a real number denoting the weight of rule $r$.  
For an \lpmln program $P$, by $\overline{P}$, we denote the unweighted ASP counterpart of $P$, 
i.e. $\overline{P} = \{r ~|~ w:r \in P \}$, 
and $P$ is called ground, if its unweighted ASP counterpart $\overline{P}$ is ground. 

\subsection{Semantics}
A ground set $X$ of atoms is called an interpretation in the context of logic programming. 
An interpretation $X$ satisfies an ASP rule $r$, denoted by $X \models r$, 
if $X \cap \hr{r} \neq \emptyset$, $\pbr{r} \not\subseteq X$, or $\nbr{r} \cap X \neq \emptyset$; 
otherwise, $X$ does not satisfy $r$, denoted by $X \not\models r$. 
For an ASP program $P$, interpretation $X$ satisfies $P$, denoted by $X \models P$, if $X$ satisfies all rules of $P$; 
otherwise, $X$ does not satisfy $P$, denoted by $X \not\models P$.
For an interpretation $X$ and an ASP program $P$, the Gelfond-Lifschitz reduct (GL-reduct) $P^X$ is defined as 
\begin{equation}
\label{eq:gl-reduct}
P^X = \{ h(r) \leftarrow \pbr{r}.  ~|~  r \in P \text{ and } \nbr{r} \cap X = \emptyset  \}
\end{equation}
And $X$ is a \textit{stable model}  of $P$, if $X$ satisfies the GL-reduct $P^X$, 
and there does not exist a proper subset $X'$ of $X$ such that $X' \models P^X$. 
By $\assm{P}$, we denote the set of all stable models of an ASP program $P$.

For an \lpmln rule $w:r$ and an interpretation $X$, the satisfiability relation is defined as $X \models w:r$ if $X \models r$ and  $X \not\models w:r$ if $X \not\models r$. 
Similarly, for an \lpmln program $P$, we have $X \models P$ if $X \models \overline{P}$ and $X \not\models P$ if $X \not\models \overline{P}$. 
By $P_X$, we denote the set of rules of an \lpmln program $P$ that can be satisfied by an interpretation $X$, 
called the \textit{\lpmln reduct} of  $P$ w.r.t. $X$, i.e. $P_X=\{w:r \in P ~|~ X \models w:r\}$. 
An interpretation $X$ is a \textit{stable model} of an \lpmln program $P$ if $X$ is a stable model of the ASP program $\overline{P_X}$. 
By $\lsm{P}$, we denote the set of all stable models of an \lpmln program $P$. 
For an \lpmln program $P$ and an interpretation $X$, 
the \textit{weight degree} $W(P,X)$ of $X$ w.r.t. $P$ is defined as $W(P,X) = exp\left(\sum_{w:r \in P_X } w\right)$. 
\begin{example}
Consider an \lpmln program $P = \{ 1 : a. ~ 1 : a \leftarrow b. ~ 2 : \leftarrow a, not ~c.  \}$ 
and an interpretation $X = \{a\}$. 
It is easy to check that $\overline{P}^X = \{a.  ~ a\leftarrow b. ~ \leftarrow a. \}$ and $X \not\models \overline{P}^X$, 
therefore, we have $X \not\in \assm{\overline{P}}$. 
While under the semantics of \lpmlnend, $\lglred{P}{X} = \{a.  ~ a\leftarrow b.\}$, 
it is clear that $X\in \lsm{P}$ and $W(P, X) = e^2$.
\end{example}

The weight degree is a kind of uncertainty degree of a stable model, there are other kinds of uncertainty degrees in \lpmlnend, which is used in probabilistic inferences. 
In this paper, we focus on the logical aspect of \lpmlnend, therefore, we omit the probabilistic inference part of \lpmln for brevity.

\subsection{Strong Equivalences}
Firstly, we review the definitions of strong equivalences in ASP and \lpmln \cite{Lifschitz2001Strongly,Lee2019LPMLNSE,Wang2019LPMLNSE}. 

\begin{definition}[Strong Equivalence for ASP]
\label{def:asp-strong-equivalence}
For ASP programs $P$ and $Q$, they are strongly equivalent, denoted by $P \aspseq Q$, 
if for any ASP program $R$, $\assm{P \cup R} = \assm{Q \cup R}$. 
\end{definition}

\begin{definition}[Strong Equivalences for \lpmlnend]
\label{def:lpmln-strong-equivalences}
For \lpmln programs $P$ and $Q$, we introduce two kinds of notions of strong equivalences:
\begin{itemize}
    \item the programs are semi-strongly equivalent or structural equivalent, denoted by $P \semise Q$, if for any \lpmln program $R$, $\lsm{P \cup R} = \lsm{Q \cup R}$; 
    \item the programs are are w-strongly equivalent, denoted by $P \equiv_{s,w} Q$, if for any \lpmln program $R$, $\lsm{P \cup R} = \lsm{Q \cup R}$, and for each stable model $X \in \lsm{P \cup R}$, $W(P \cup R, X) = W(Q \cup R, X)$. 
\end{itemize}
\end{definition}

For \lpmln programs $P$ and $Q$, it is clear that $P \equiv_{s,w} Q$ implies $P \semise Q$, 
which means the semi-strong equivalence is a foundation of the w-strong equivalence. 
Similarly, other kinds of strong equivalences involving uncertainty degrees are based on the semi-strong equivalence, 
therefore, we first study the syntactic conditions of semi-strongly equivalent \lpmln programs. 
For simplicity, in rest of the paper, we omit the weights of \lpmln rules 
and regard an \lpmln program $P$ as its unweighted ASP counterpart, i.e. $P = \overline{P}$.

Secondly, we review the HT-model based approaches that characterize the strong equivalences of ASP and \lpmlnend. 

\begin{definition}[HT-Interpretation]
\label{def:ht-interpretations}
An HT-interpretation is a pair $(X, Y)$ of interpretations such that $X \subseteq Y$, 
and $(X, Y)$ is called total if $X = Y$, otherwise, it is called non-total.
\end{definition}

\begin{definition}[HT-Models for ASP]
An HT-interpretation $(X, Y)$ satisfies an ASP rule $r$, denoted by $(X, Y) \models r$, 
if $Y \models r$ and $X \models \{r\}^Y$; 
$(X, Y)$ is an HT-model of an ASP program $P$, denoted by $(X, Y) \models P$, if $(X, Y)$ satisfies all rules of $P$. 
By $\asse{P}$, we denote the set of all HT-models of the ASP program $P$.
\end{definition}

\begin{definition}[HT-Models for \lpmlnend]
An HT-interpretation $(X, Y)$ satisfies an \lpmln rule $r$, denoted by $(X, Y) \models r$, 
if $Y \models r$ and $X \models \{r\}_Y^Y$; 
$(X, Y)$ is an HT-model of an \lpmln program $P$, denoted by $(X, Y) \models P$, if $(X, Y)$ satisfies all rules of $P$. 
By $\lse{P}$, we denote the set of all HT-models of the \lpmln program $P$.
\end{definition}

Based on the above definitions, 
Proposition \ref{prop:ht-model-irrelevant-atoms} shows a property of HT-models, 
and Theorem \ref{thm:se-ht-models} shows the characterizations of the strong equivalences reviewed in this section. 

\begin{proposition}
\label{prop:ht-model-irrelevant-atoms}
For a logic program $P$ and an HT-model $(X, Y)$ of $P$, 
if $a$ is an atom such that $a \not\in at(P)$, both of HT-interpretations $(X, Y \cup \{a\})$ and $(X \cup \{a\}, Y \cup \{a\})$ are HT-models of $P$.
\end{proposition}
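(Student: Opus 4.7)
The plan is to reduce the claim to a rule-by-rule verification, using the observation that satisfaction of any single rule $r$ and the GL-reduct of $\{r\}$ depend only on atoms in $at(r)$. Since $a \notin at(P) = \bigcup_{r \in P} at(r)$, the atom $a$ is invisible to every rule in $P$, so inserting it into $X$ or $Y$ cannot change whether the HT-satisfaction clauses hold.

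Concretely, I would first fix an arbitrary rule $r \in P$ and establish the auxiliary observation that for any interpretation $Z$, if $a \notin at(r)$, then $Z \models r$ iff $Z \cup \{a\} \models r$. This follows immediately from the definition of $\models$ for ASP rules, since none of $Z \cap \hr{r}$, the condition $\pbr{r} \subseteq Z$, and $Z \cap \nbr{r}$ is affected by inserting or removing $a$. Next I would argue, by the same case distinction on $\nbr{r} \cap Y$, that $\{r\}^{Y \cup \{a\}} = \{r\}^{Y}$ in the ASP setting, and $\{r\}^{Y \cup \{a\}}_{Y \cup \{a\}} = \{r\}^{Y}_{Y}$ in the \lpmln setting (using the first observation for the $Y \models r$ part of the \lpmln reduct). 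Finally I would combine these: from $(X,Y) \models r$ we get $Y \models r$ and $X \models \{r\}^{Y}$; hence $Y \cup \{a\} \models r$ and, because the reducts coincide, $X \models \{r\}^{Y \cup \{a\}}$, yielding $(X, Y \cup \{a\}) \models r$. For the second half, one additionally notes that $\{r\}^{Y}$ consists of a single rule all of whose atoms lie in $at(r)$, so by the auxiliary observation $X \cup \{a\} \models \{r\}^{Y}$ iff $X \models \{r\}^{Y}$, giving $(X \cup \{a\}, Y \cup \{a\}) \models r$.

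There is no real obstacle beyond bookkeeping: the whole argument is a check that each ingredient of the HT-model definition is insensitive to atoms outside $at(r)$. The only point requiring mild care is the \lpmln case, where the reduct is defined in two stages ($P_Y$ followed by the GL-style operation), so one must verify invariance at each stage; but since both stages are governed entirely by $\nbr{r}$, $\pbr{r}$, and $\hr{r}$, this reduces to the same atom-locality argument used for ASP.
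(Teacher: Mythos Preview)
Your argument is correct. The paper does not supply a proof for this proposition; it is stated in the preliminaries as an evident background fact and then used freely (for instance inside the proof of Lemma~\ref{lem:s-dl-se-preserving}). What you have written is exactly the natural verification the authors are taking for granted: satisfaction of a rule $r$ and the formation of $\{r\}^Y$ (respectively $\{r\}_Y^Y$) depend only on the intersections of the interpretation with $\hr{r}$, $\pbr{r}$, $\nbr{r}$, so adjoining an atom $a \notin at(r)$ to $X$ or $Y$ changes nothing.

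One tiny wording fix: you say ``$\{r\}^{Y}$ consists of a single rule all of whose atoms lie in $at(r)$'', but $\{r\}^{Y}$ may also be empty (when $\nbr{r} \cap Y \neq \emptyset$). That case is harmless since every interpretation satisfies the empty program, so your conclusion stands; just phrase it as ``$\{r\}^{Y}$ is either empty or consists of a single rule whose atoms lie in $at(r)$''.
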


\begin{theorem}
\label{thm:se-ht-models}
For logic programs $P$ and $Q$, 
\begin{itemize}
    \item $P \aspseq Q$ iff $\asse{P} = \asse{Q}$; 
    \item $P \semiseq Q$ iff $\lse{P} = \lse{Q}$; and 
    \item $P \equiv_{s, w} Q$ iff $\lse{P} = \lse{Q}$ and for any interpretation $X$, $W(P, X) = W(Q, X)$.
\end{itemize}
\end{theorem}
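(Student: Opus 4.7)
The plan is to handle the three biconditionals in parallel, since all three rest on the same two ingredients: a \emph{splitting property} for HT-models under program union, and an HT-characterization of stable models. The splitting property is $\asse{P \cup R} = \asse{P} \cap \asse{R}$ and $\lse{P \cup R} = \lse{P} \cap \lse{R}$; both follow immediately from the definitions of HT-satisfaction once one notes that $\{r\}^Y$ depends only on $r$ and $Y$ (similarly for $\{r\}_Y^Y$), so an HT-interpretation satisfying a union satisfies each part. The HT-characterization of stable models that I would use is: $Y \in \assm{P}$ iff $(Y,Y) \in \asse{P}$ and there is no $X \subsetneq Y$ with $(X,Y) \in \asse{P}$; and analogously for $\lsm{P}$ via $\lse{P}$. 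I would either cite this from Lifschitz--Pearce--Valverde / Turner for the ASP case, and for the \lpmln case derive it by unfolding the definition of stable model through $P_Y$ and the ASP characterization applied to $\overline{P_Y}$.

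For the ``$\Leftarrow$'' directions of the first two items, I would argue: if $\asse{P}=\asse{Q}$ then for any $R$ the splitting lemma gives $\asse{P\cup R}=\asse{Q\cup R}$, and the HT-characterization of stable models then forces $\assm{P\cup R}=\assm{Q\cup R}$; the \lpmln case is identical with $\lse{\cdot}$ and $\lsm{\cdot}$. For the third item, if additionally $W(P,X)=W(Q,X)$ for every $X$, then since $P$ and $R$ contribute disjoint summands in the exponent,
\begin{equation*}
W(P\cup R,X)=\exp\!\Big(\!\sum_{w{:}r\in P_X}w+\!\sum_{w{:}r\in R_X}w\Big)=W(P,X)\cdot W(R,X),
\end{equation*}
and the same factorization for $Q$ yields equality of weights on the union.

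For the ``$\Rightarrow$'' directions I would argue by contrapositive and construct an explicit extension $R$ witnessing non-equivalence from a distinguishing HT-interpretation $(X,Y)$, say with $(X,Y)\in\asse{P}\setminus\asse{Q}$. I would split into two cases. If $X=Y$ (total), take $R$ to be $\{a\leftarrow\,:\,a\in Y\}\cup\{\leftarrow a\,:\,a\notin Y\}$, making $Y$ the only possible stable model of either union; the splitting lemma forces $(Y,Y)\in\asse{Q\cup R}$ to fail, whence $Y\in\assm{P\cup R}\setminus\assm{Q\cup R}$. If $X\subsetneq Y$ (non-total), take $R=\{a\leftarrow b\,:\,a,b\in X\}\cup\{a\leftarrow\,:\,a\in Y\setminus X\}\cup\{\leftarrow a\,:\,a\notin Y\}\cup\{p\leftarrow\operatorname{not}p,\operatorname{not}a\,:\,a\in Y\setminus X\}$ (for a fresh $p$), so that $Y$ becomes a stable model of $Q\cup R$ but not of $P\cup R$ (since $(X,Y)$ is a ``smaller'' HT-model on the $P$ side blocking minimality). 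The \lpmln analogue uses the same $R$ (with arbitrary weights) and the \lpmln reduct; Proposition~\ref{prop:ht-model-irrelevant-atoms} is invoked to extend $(X,Y)$ consistently over fresh atoms in $R$. For the weight part of the third item, the contrapositive is immediate: if $W(P,X)\neq W(Q,X)$ for some $X\in\lsm{P}=\lsm{Q}$, take $R=\emptyset$.

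The main obstacle is getting the witness extension $R$ right in the non-total case, because $R$ must simultaneously (i)~pin down $Y$ as the only candidate interpretation, (ii)~allow $X$ to witness non-minimality of the reduct on the $P$ side, and (iii)~forbid $X$ from doing so on the $Q$ side. A careful choice and a case analysis in terms of which atoms appear in $Y\setminus X$ versus $X$ versus outside $Y$ is needed, and the argument must be checked against both the ASP reduct $\{r\}^Y$ and the \lpmln reduct $\{r\}_Y^Y$; everything else is a direct bookkeeping using the splitting lemma and the HT-characterization.
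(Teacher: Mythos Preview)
The paper does not prove Theorem~\ref{thm:se-ht-models} at all: it appears in the Preliminaries (Section~2.3) as a known characterization, with the three items attributed respectively to \cite{Lifschitz2001Strongly,Turner2001SE} and to \cite{Lee2019LPMLNSE,Wang2019LPMLNSE}. So there is nothing to compare against; your proposal is essentially a reconstruction of the classical arguments from those references, which is fine in spirit.

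That said, two concrete points in your proposal need fixing. First, in the non-total witness construction your items (1) and (2) are swapped: as written, the facts $\{a\leftarrow\ :\ a\in Y\setminus X\}$ force every model of $R^Y$ to contain $Y\setminus X$, so $X$ itself fails to satisfy $R^Y$ and cannot ``block minimality on the $P$ side'' as you claim. The standard construction puts the facts on $X$ and the implications on $Y\setminus X$ (e.g.\ $\{a.\ :\ a\in X\}\cup\{a\leftarrow b.\ :\ a\in Y,\ b\in Y\setminus X\}$), so that the only models of $R^Y$ contained in $Y$ are exactly $X$ and $Y$; the fresh-atom clause (4) is then unnecessary in both the ASP and the \lpmln versions. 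Second, for the ``$\Rightarrow$'' direction of the third item, taking $R=\emptyset$ only works when the distinguishing interpretation $X$ is already a stable model; the theorem asserts $W(P,X)=W(Q,X)$ for \emph{every} interpretation $X$, so in general you must first extend by, say, $R=\{w{:}a.\ \mid\ a\in X\}$ to force $X\in\lsm{P\cup R}=\lsm{Q\cup R}$, and then use your multiplicativity identity $W(P\cup R,X)=W(P,X)\cdot W(R,X)$ to transfer the weight discrepancy.
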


Finally, we show an example of SE-conditions of ASP and \lpmln programs.
For a pair $P$ and $Q$ of logic programs, 
a \textit{syntactic condition} $C$ for the programs is a formula w.r.t. the properties and relationships among the sets of head, positive body, and negative body of each rule in the programs. 
A syntactic condition $C$ is called an \textit{SE-condition}, if for any programs $P$ and $Q$ satisfying the condition $C$, $P$ and $Q$ are strongly equivalent or semi-strongly equivalent;
otherwise, it is called a \textit{non-SE-condition}. 
For example, ASP programs $P = \{r\}$ and $\emptyset$ are strongly equivalent iff the condition in Equation \eqref{eq:asp-010-se-condition} is satisfied, which is called TAUT and CONTRA \cite{Osorio2001Equivalence,Lin2005Discover}. 
\begin{equation}
\label{eq:asp-010-se-condition}
\left( \hr{r} \cup \nbr{r} \right) \cap \pbr{r} \neq \emptyset
\end{equation}
\lpmln programs $P = \{r\}$ and $\emptyset$ are semi-strongly equivalent iff the condition in Equation \eqref{eq:lpmln-010-se-condition} is satisfied \cite{Wang2019LPMLNSE}.
\begin{equation}
\label{eq:lpmln-010-se-condition}
\left( \hr{r} \cup \nbr{r} \right) \cap \pbr{r} \neq \emptyset \text{ or } \hr{r} \subseteq \nbr{r}
\end{equation}
Therefore, Equation \eqref{eq:asp-010-se-condition} and \eqref{eq:lpmln-010-se-condition} are SE-conditions in ASP and \lpmlnend, respectively.  


\section{Independent Sets and S-* Transformations}
In this section, we present a novel approach to studying the strong equivalences of logic programs. 
Firstly, we present the notion of independent set. 
Secondly, we present the notions of  S-* transformations. 
Thirdly, we show the properties of the S-* transformations w.r.t. the  strongly equivalences of logic programs. 

\subsection{Independent Sets}
For convenient description, a logic program $P$ can be regarded as a tuple of rules, i.e., $P = \langle r_1, ..., r_n \rangle$, 
where $r_i$ is the $i$-th rule of the program $P$. 
Meanwhile, we treat a tuple as an ordered set that may have the same elements, 
therefore, some notations of sets are used for tuples in this paper. 
For example, for a tuple $T$, we use $e_i \in T$ to denote $e_i$ is the $i$-th element of $T$, 
and we use $|T|$ to denote the number of elements of $T$.
For logic programs $P = \langle r_1, ..., r_n \rangle$ and $Q = \langle t_1, ..., t_m \rangle$, 
the pair $\langle P, Q \rangle$ is the concatenation  of $P$ and $Q$, 
i.e.  $\langle P, Q \rangle = \langle r_1, ..., r_n, t_1, ..., t_m\rangle$. 
And a tuple $T = \langle P_1, ..., P_n \rangle$ of logic programs can be recursively defined as 
\begin{equation}
    \langle P_1, ..., P_n \rangle = \langle \langle P_1, ..., P_{n-1}\rangle , P_n \rangle
\end{equation}
Based on the above notations, a tuple $T$ of logic programs is regarded as an ordered list of rules. 
For a rule $r$ of the form \eqref{eq:asp-rule-form}, 
it can be represented as a tuple of sets: head $\hr{r}$, positive body $\pbr{r}$, and negative body $\nbr{r}$.
Therefore, a tuple $T$ of logic programs is turned into  
\begin{equation}
\label{eq:program-sets-tuple}
\langle \hr{r_1}, \pbr{r_1}, \nbr{r_1}, \ldots, \hr{r_{|T|}}, \pbr{r_{|T|}}, \nbr{r_{|T|}} \rangle
\end{equation} 
By $TS(T)$, we denote the tuple of sets of the form \eqref{eq:program-sets-tuple} for a tuple $T$ of logic programs. 
It is easy to observe that $|TS(T)| = 3 * |T|$. 
Now, we define the independent sets of a tuple $T$ of logic programs. 

\begin{definition}[Independent Sets]
    \label{def:independent-set}
    Let $T$ be a tuple of logic programs, $N = \{i ~|~ 1 \leq i \leq 3* |T|\}$ a set of positive integers, 
    and $N'$ a non-empty subset of $N$, 
    the independent set $I_{N'}$ w.r.t. $T$ and $N'$ is defined as 
    \begin{equation}
        \label{eq:independent-set}
        I_{N'} =  \bigcap_{i \in N'} S_i - \bigcup_{j \in N - N'} S_j
    \end{equation}
    where $S_i \in TS(T)$ and  $S_j \in TS(T)$. 
    In Equation \eqref{eq:independent-set}, we  call the set $S_i$ ($i \in N'$) an intersection set (i-set) of $I_{N'}$. 
\end{definition}

By Definition \ref{def:independent-set}, 
since there are $2^{3 * |T|} - 1$ non-empty subsets of $N$, 
there are $2^{3 * |T|} - 1$ independent sets for the tuple $T$ of logic programs. 
Intuitively, an independent set $I$ of a tuple $T$ of logic programs contains atoms 
that occur in the i-sets of $I$ and do not occur in the other sets of $TS(T)$. 
It is easy to check that the independent sets w.r.t. $T$ are \textit{pairwise disjoint}, 
and all atoms of $T$ appear in independent sets w.r.t. $T$. 
Therefore, the independent sets can be viewed as a set of fundamental elements to construct logic programs. 
To conveniently distinguish different independent sets w.r.t. $T$, we assign a label to each independent set as follows. 
For an independent set $I_{N'}$ w.r.t. $T$ and a set $N'$ of positive integers, 
let $n = |T|$, 
we use $B(N')$ to denote a tuple of 0s and 1s w.r.t. $I_{N'}$, where the $k$-th element of $B(N')$ is 1 if the $k$-th set $S_k$ of $TS(T)$ is an i-set of $I_{N'}$, otherwise, it is 0,  which is as follows 
\begin{equation}
B(N') = (b_1, \ldots, b_{3 * |T|}), \text{ where } b_k = 
\begin{cases}
1 & \text{if } S_k \text{ is an i-set of } I_{N'}, \text{ i.e., } k \in N'; \\
0 & \text{otherwise}.    
\end{cases}
\end{equation}
Then the tuple $B(N')$ can be viewed as a binary number $(b_1\ldots b_{3n})_2$, 
which can be observed that $1 \leq B(N') < 2^{3 * |T|}$. 
By $B(N', m, n)$ ($1 \leq m < n \leq |B(N')|$), we denote the number w.r.t. the sub-tuple of $B(N')$ from $m$ to $n$, 
i.e., $B(N', m, n) = (b_m, ..., b_n)_2$ where $b_i \in B(N')$ ($m \leq i \leq n$).
Based on the above notations, 
we have established a one-to-one mapping from independent sets to positive integers, 
therefore, we can use $I_k$ ($1 \leq k < 2^{3 * |T|}$) to denote different independent sets w.r.t. a tuple $T$ of logic program, 
where the number $k$ is called the \textit{name} of $I_k$. 
Note that by the definition of independent sets, $I_0$ is not an independent set. 
For a tuple $T$ of logic programs, 
by $\is{T}$, we denote the set of names of all independent sets w.r.t. $T$, i.e. $\is{T} = \{i ~|~ 1 \leq i < 2^{3*|T|}\}$;  
by $\eis{T}$ and $\nis{T}$, we denote the sets of names of all empty and non-empty independent sets in $\is{T}$, respectively. 

\begin{example}
\label{ex:independent-sets}
Consider a logic program $P = \langle r \rangle$, where rule $r$ is 
\begin{equation}
a  \vee b \vee d \leftarrow b, c, ~not~ c.
\end{equation}
we have  $\hr{r} = \{a, b, d\}$, $\pbr{r} = \{b, c\}$, $\nbr{r} = \{c\}$, 
and $TS(P) = \langle \hr{r}, \pbr{r}, \nbr{r} \rangle$. 
It is easy to check that there are seven independent sets w.r.t. $P$, which are as follows 
\begin{itemize}
    \item $I_1 = I_{001} = \nbr{r} - \left( \hr{r} \cup \pbr{r} \right) = \emptyset$, 
    \item $I_2 = I_{010} = \pbr{r} - \left( \hr{r} \cup \nbr{r} \right) = \emptyset$, 
    \item $I_3 = I_{011} = \left(\pbr{r} \cap \nbr{r} \right) - \hr{r} = \{ c\}$, 
    \item $I_4 = I_{100} = \hr{r} - \left( \pbr{r} \cup \nbr{r} \right) = \{a, d \}$, 
    \item $I_5 = I_{101} = \left(\hr{r} \cap \nbr{r} \right) - \pbr{r} = \emptyset$, 
    \item $I_6 = I_{110} = \left(\hr{r} \cap \pbr{r} \right) - \nbr{r} = \{b \}$,  and
    \item $I_7 = I_{111} = \hr{r} \cap \pbr{r} \cap \nbr{r} = \emptyset$. 
\end{itemize}
\end{example}

As shown in Example \ref{ex:independent-sets}, there are 7 independent sets of a rule. 
For a rule $r_k$ in a tuple $T$ of logic programs, 
we use $I_i(r_k)$ ($1 \leq i \leq 7$) to denote independent sets of the rule $r_k$. 
Besides, we define the set $I_0(r_k)$ w.r.t. the rule $r_k$ and the tuple $T$ as 
\begin{equation}
\label{eq:i0-iset}
I_0(r_k) = at(T) - \left( \hr{r_k} \cup \pbr{r_k} \cup \nbr{r_k} \right)
\end{equation}
where $at(T)$ is the set of atoms occurred in the tuple $T$.
By the law of set difference, ``$A - B$'' is equivalent to ``$A \cap B^c$'', where $B^c$ is the complementary set of $B$. 
Therefore, for an independent set $I_{N'}$ of a tuple $T$ of logic programs, Equation \eqref{eq:independent-set} can be reformulated as 
\begin{equation}
\label{eq:iset-composition}
I_{N'} = 
\bigcap_{1 \leq k \leq |T|} I_{i_k}(r_k), \text{ where } r_k \in T \text{ and }  i_k = B(N', 3k-2, 3k) 
\end{equation}
In Equation \eqref{eq:iset-composition}, we say the independent set $I_{N'}$ is composed by $I_{i_k}(r_k)$, 
denoted by $I_{N'} \sqsubseteq I_{i_k}(r_k)$.
For non-empty independent sets $I$ and $I_i(r)$, it is easy to check that $I \subseteq I_i(r)$ if $I \sqsubseteq I_i(r)$ and $I \cap I_i(r) = \emptyset$ if $I \not\sqsubseteq I_i(r)$.

Based on the above definitions of independent sets, 
given all the independent sets w.r.t. a tuple $T$, we can construct each element $S_i$ of the tuple $TS(T)$ as follows  
\begin{equation}
S_i = \bigcup IS', \text{ where } IS' = \{I ~|~ S_i \text{ is an i-set of } I. \}
\end{equation} 
Therefore, there is a one-to-one mapping from a tuple $T$ of logic programs to the independent sets of $T$. 
Naturally, a tuple $T$ of logic programs can be transformed by adding, deleting, and replacing atoms of its independent sets, 
which is the basic idea of the notions of S-* transformations defined in what follows.

\subsection{S-* Transformations}
By transforming the independent sets of a tuple $T$ of logic programs, we can construct a new tuple of logic programs. 
Here, we present five kinds of ways of transformations, called S-* transformations. 
\begin{definition}[S-* Transformations]
Let $T$ be a tuple of logic programs, $I_k$ an independent set w.r.t. $T$,  and  $a'$ a new atom such that $a' \not\in at(T)$,
the S-* transformations are defined as follows 
\begin{itemize}
    \item single replacement (S-RP) transformation is to replace an atom $a$ in $I_k$ with $a'$, 
    denoted by $\opsrep{I_k, a, a'}$,  where $|I_k| > 0$;
    \item single deletion (S-DL) transformation is to delete an atom $a$ from $I_k$,  denoted by $\opsdel{I_k, a}$, 
    where $a \in I_k$  and $|I_k| > 2$;
    \item single reduce (S-RD) transformation is to delete an atom $a$ from $I_k$, denoted by $\opsrd{I_k, a}$, 
    where  $a \in I_k$ and $0 < |I_k| \leq 2$;
    \item single addition (S-AD) transformation is to add $a'$ to $I_k$, denoted by $\opsadd{I_k, a'}$, 
    where $|I_k| \geq 2$; and 
    \item single extension (S-EX) transformation is to add $a'$ to $I_k$, denoted by $\opsex{I_k, a'}$, 
    where $0 \leq |I_k| < 1$.
\end{itemize}
\end{definition}

For a tuple $T$ of logic programs, we use $\opsrep{T, I_k, a, a'}$ to denote the tuple obtained from $T$ by an S-RP transformation $\opsrep{I_k, a, a'}$, 
and use $\Gamma^\circ(T, I_k, a)$ to denote the tuple obtained from $T$ by other S-* transformations $\Gamma^\circ(I_k, a)$, where $\circ \in \{ -, \ominus, +, \oplus \}$.
For an S-* transformation, we define a series of sub-transformations of S-*, i.e. the S-*-i transformations, 
where an S-*-i transformation can only be used to operate independent set $I$ such that $|I| = i$. 
We use  $\Gamma_i^\circ(\bullet)$ to denote the results obtained by an S-*-i transformation, where $\circ \in \{ \pm, -, \ominus, +, \oplus \}$ and $i \geq 0$.
It is easy to observe that there are only two sub-transformations of S-EX and S-RD, respectively, i.e. S-EX-0, S-EX-1, S-RD-2, and S-RD-1 transformations.

\begin{example}
\label{ex:s-transformations}
Recall the rule $r$ in Example \ref{ex:independent-sets}, 
Table \ref{tab:s-transformation-example} shows different independent sets and rules  obtained  by the S-* transformations, 
where $x$ is a newly introduced atom and $I_k$ ($1 \leq k \leq 7$) are independent sets of the tuple $\langle r \rangle$ in Example \ref{ex:independent-sets}.

\begin{table}
    \centering
    \caption{Rules Obtained from $r$ by S-* Transformations}
    \label{tab:s-transformation-example}
    
    \begin{tabular}{ccc}
    \hline 
    S-*& $\Gamma^\circ(\bullet)$ & $r^\circ$ \\
    \hline
    S-RP & $\srep{I_3} = \opsrep{I_3, c, x} = \{x\}$ & $a \vee b \vee d \leftarrow b, x, not ~x.$ \\
    S-AD & $\sadd{I_4} = \opsadd{I_4, x} = \{a, d, x\}$ & $a \vee b \vee d \vee x \leftarrow b, c, not ~c.$ \\
    S-DL & $(I^+_4)^- = \opsdel{\sadd{I_4}, a}  = \{d, x\}$ & $b \vee d \vee x \leftarrow b, c, not ~c.$ \\
    S-RD-1 & $\srd{I_6} = \Gamma^\ominus_1(I_6, b) = \emptyset$ & $a \vee d \leftarrow c, not ~c.$ \\
    S-RD-2 & $\srd{I_4}= \Gamma^\ominus_2(I_4, d) = \{a\}$ & $a \vee b \leftarrow b, c, not ~c.$ \\
    S-EX-0 & $\sex{I_1} = \Gamma^\oplus_0(I_1, x) = \{x\}$ & $a \vee b \vee d \leftarrow b, c, not ~c, not ~x.$ \\
    S-EX-1 & $\sex{I_3} = \Gamma^\oplus_1(I_3, x) = \{c, x\}$ & $a \vee b \vee d \leftarrow b, c, x, not ~c, not ~x.$ \\
    \hline
    \end{tabular}
    \end{table}
\end{example}


\subsection{Properties of S-* Transformations}
Now we investigate the properties of S-* transformations w.r.t. the strong equivalences of logic programs, 
i.e., the SE-preserving and NSE-preserving properties. 
For brevity, we focus on the investigations for \lpmln programs, and show there are the same results for ASP programs. 
Firstly, we define the notions of SE-preserving and NSE-preserving properties. 
\begin{definition}[SE-preserving and NSE-preserving Properties]
\label{def:se-preserving}
Let the tuple $T = \langle P, Q \rangle$ be a pair of logic programs and $T^\circ = \langle P^\circ, Q^\circ \rangle$ a 
tuple obtained from $T$ by an S-* transformation $\Gamma^\circ(\bullet)$, where $\circ \in \{\pm, +, \oplus, -, \ominus\}$. 
The S-* transformation is called \textit{SE-preserving}, if $P \equiv_{s, \triangle} Q$ implies $P^\circ \equiv_{s, \triangle} Q^\circ$; 
and it is called \textit{NSE-preserving}, if $P \not\equiv_{s, \triangle} Q$ implies $P^\circ \not\equiv_{s, \triangle} Q^\circ$, 
where $\triangle \in \{a, s\}$. 
\end{definition}

\begin{table}
    \centering
    \caption{Properties of S-* Transformations}
    \label{tab:properties-s-*-summary}
    \resizebox{\columnwidth}{!}{
	    \begin{tabular}{cccccccc}
	    \hline
	    & S-RP & S-DL & S-RD-2 & S-RD-1 & S-AD & S-EX-1 & S-EX-0 \\
	    \hline
	    SE-preserving $\left( \text{\lpmlnend} \right)$ & Yes & Yes & Yes & No & Yes & No  & No\\
	    NSE-preserving $\left( \text{\lpmlnend} \right)$ & Yes & Yes & No & No & Yes & Yes & No \\
	    SE-preserving $\left( \text{ASP} \right)$ & Yes & Yes & Yes & No & Yes & No  & No\\
	    NSE-preserving $\left( \text{ASP} \right)$ & Yes & Yes & No & No & Yes & Yes & No \\
	    \hline
	    \end{tabular}
	}
\end{table}

Secondly, we investigate whether an S-* transformation is SE-preserving and NSE-preserving, 
and we focus on the discussion of S-* transformations in the context of \lpmlnend. 
All the main results of this subsection are summarized in Table \ref{tab:properties-s-*-summary}, 
which will be investigated one by one in what follows. 
To investigate the SE-preserving and NSE-preserving properties of the S-* transformations, 
we should investigate the satisfiability between the HT-interpretations and \lpmln programs under S-* transformations by Theorem \ref{thm:se-ht-models}, which is the main idea of the investigations. 

\textbf{S-RP Transformation.}
For the S-RP transformation, it is clear that there exists a one-to-one mapping from the HT-models of a logic program $P$ to the HT-models of $\srep{P}$, 
where $\srep{P}$ is obtained from $P$ by an S-RP transformation. 
Therefore, we have Theorem \ref{thm:s-rp-se-nse-preserving}.
\begin{theorem}
\label{thm:s-rp-se-nse-preserving}
The S-RP transformation is SE-preserving and NSE-preserving in \lpmlnend. 
\end{theorem}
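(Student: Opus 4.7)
The plan is to invoke Theorem \ref{thm:se-ht-models} to reduce the claim about semi-strong equivalence to a claim about HT-models, and then exhibit an explicit bijection between the HT-models of $P$ and the HT-models of $\srep{P}$ (and likewise for $Q$) induced by the atom renaming. The first thing to notice is that because the independent sets w.r.t.\ $T = \langle P, Q \rangle$ are pairwise disjoint and cover $at(T)$, the atom $a$ in $I_k$ occurs in exactly those head/positive-body/negative-body sets of $TS(T)$ that are i-sets of $I_k$ and nowhere else. Hence the S-RP transformation $\opsrep{I_k, a, a'}$ acts on both components of $T$ simultaneously as a \emph{global} renaming of $a$ to the fresh atom $a'$ across every rule of $P$ and $Q$. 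In particular, $\srep{T} = \langle \srep{P}, \srep{Q} \rangle$ where $\srep{P}$ and $\srep{Q}$ are obtained from $P$ and $Q$ by uniformly replacing $a$ with $a'$.

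Next I would define a renaming map $\phi$ on interpretations by $\phi(X) = (X \setminus \{a\}) \cup \{a'\}$ if $a \in X$ and $\phi(X) = X$ otherwise; on the class of interpretations that do not contain $a'$, this is a bijection onto the class of interpretations not containing $a$, with inverse obtained by swapping the roles of $a$ and $a'$. A straightforward inspection of the satisfiability clauses shows that for any rule $r$ and any interpretations $X \subseteq Y$ with $a' \notin Y$, we have $(X, Y) \models r$ iff $(\phi(X), \phi(Y)) \models \srep{r}$, and the analogous equivalence holds for the reduct clause $X \models \{r\}_Y^Y$. Extending over all rules in $P$ gives a bijection
\begin{equation}
\psi : \{ (X, Y) \in \lse{P} \mid a' \notin Y \} \;\longrightarrow\; \{ (X', Y') \in \lse{\srep{P}} \mid a \notin Y' \},
\end{equation}
defined by $\psi(X, Y) = (\phi(X), \phi(Y))$, and by the same recipe a bijection $\psi'$ between the corresponding restrictions of $\lse{Q}$ and $\lse{\srep{Q}}$. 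Since the definitions of $\psi$ and $\psi'$ depend only on $a$ and $a'$ (not on $P$ or $Q$), they coincide as maps between interpretation-pairs, so equality of the restricted HT-model sets on one side is equivalent to equality on the other.

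The main obstacle is the universe-of-atoms bookkeeping: HT-models range over all atoms and may contain $a'$ in the $P$-side or $a$ in the $\srep{P}$-side, so the bijection does not act on $\lse{P}$ in its entirety. The remedy is Proposition \ref{prop:ht-model-irrelevant-atoms}: since $a' \notin at(P)$ and $a \notin at(\srep{P})$, the sets $\lse{P}$ and $\lse{\srep{P}}$ are each determined (modulo the closure properties of that proposition) by their restrictions $\{(X, Y) : a' \notin Y\}$ and $\{(X', Y') : a \notin Y'\}$ respectively. Hence $\lse{P} = \lse{Q}$ iff the two restrictions agree iff (via $\psi$ and $\psi'$) $\lse{\srep{P}} = \lse{\srep{Q}}$. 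Applying Theorem \ref{thm:se-ht-models} in both directions yields that $P \semiseq Q$ iff $\srep{P} \semiseq \srep{Q}$, which gives both the SE-preserving and NSE-preserving properties. Exactly the same argument (with $\asse{\cdot}$ in place of $\lse{\cdot}$) handles the ASP case, though the theorem as stated is for \lpmln.
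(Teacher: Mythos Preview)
Your proposal is correct and takes essentially the same approach as the paper: the paper merely asserts that ``there exists a one-to-one mapping from the HT-models of a logic program $P$ to the HT-models of $\srep{P}$'' and declares the theorem proven, while you spell out that bijection explicitly via the atom-renaming map $\phi$ and handle the bookkeeping with Proposition~\ref{prop:ht-model-irrelevant-atoms}. Your treatment is more detailed than the paper's one-sentence justification, but the underlying idea is identical.
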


\textbf{S-DL Transformation in \lpmlnend.}
For the S-DL transformation, consider an \lpmln rule $r$ in a tuple $T$ of logic programs and an independent set $I$ w.r.t. $T$ such that $|I| \geq 3$, 
suppose  rule $\sdel{r}$ is the counterpart of $r$ in $\opsdel{T, I, a}$. 
For an HT-interpretation $(X, Y)$ such that $a \not\in Y$, 
Table \ref{tab:sat-s-dl} shows the satisfiability between $(X, Y)$ and rules  $r$ and $\sdel{r}$ under the S-DL transformation, 
where 
$I_i(r)$ means $I \sqsubseteq I_i(r)$,
``$*$'' means $(X, Y) \models \sdel{r}$ or $(X, Y) \not\models \sdel{r}$,  
``\textemdash'' means corresponding case does not exist, 
$X' = X \cup \{a\}$, and $Y' = Y \cup \{a\}$.
By the results in Table \ref{tab:sat-s-dl}, we investigate the SE-preserving and NSE-preserving properties by Lemma \ref{lem:s-dl-se-preserving} and \ref{lem:s-dl-nse-preserving}, respectively.

\begin{table}
\centering
\caption{Satisfiability between HT-interpretations and Rules under the S-DL Transformation}
\label{tab:sat-s-dl}
\resizebox{\columnwidth}{!}{
\begin{tabular}{ccccccc}
    \hline
    $I \sqsubseteq$   & $(X, Y) \models r$ & $(X, Y) \not\models r$ & $(X, Y') \models r$ & $(X, Y') \not\models r$  & $(X', Y') \models r$ & $(X', Y') \not\models r$  \\
    \hline
    $I_0(r)$ & $(X, Y) \models \sdel{r}$ & $(X, Y) \not\models \sdel{r}$ & $(X, Y) ~\models~  \sdel{r}$ & $(X, Y) ~\not\models~  \sdel{r}$ & $(X, Y) ~\models~  \sdel{r}$ & $(X, Y) ~\not\models~  \sdel{r}$ \\  

    $I_1(r)$ & $(X, Y) \models \sdel{r}$ & $(X, Y) \not\models \sdel{r}$ & $*$ &  \textemdash & $*$ & \textemdash \\  

    $I_2(r)$ & * & \textemdash & $*$ & \textemdash & $(X, Y) \models \sdel{r}$ &  $(X, Y) \not\models \sdel{r}$ \\  

    $I_4(r)$ & $(X, Y) \models \sdel{r}$ & $(X, Y) \not\models \sdel{r}$ & $(X, Y) ~\models~  \sdel{r}$ & $*$ & $*$ & \textemdash \\  

    $I_5(r)$ & $(X, Y) \models \sdel{r}$ & $(X, Y) \not\models \sdel{r}$ & $*$ &  \textemdash & $*$ & \textemdash \\ 

    Others & $(X, Y) \models \sdel{r}$ & \textemdash & $(X, Y) \models \sdel{r}$ & \textemdash & $(X, Y) \models \sdel{r}$ & \textemdash  \\ 
    \hline
\end{tabular}
}
\end{table}

\begin{lemma}
\label{lem:s-dl-se-preserving}
The S-DL transformation is SE-preserving in \lpmlnend. 
\end{lemma}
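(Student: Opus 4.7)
The strategy is to reduce the semi-strong equivalence claim to HT-models via Theorem~\ref{thm:se-ht-models} and then exploit Table~\ref{tab:sat-s-dl} rule-by-rule. After the S-DL transformation, the atom $a$ no longer occurs in any rule of $P^-$ or $Q^-$, since $a$ lay in the unique independent set $I_k$ containing it (independent sets are pairwise disjoint) and $I_k$ now loses $a$. Combining this with Proposition~\ref{prop:ht-model-irrelevant-atoms} and its immediate converse for programs not mentioning $a$, the HT-satisfaction of $P^-$ (resp. $Q^-$) is invariant under adding or removing $a$ from the components of any HT-interpretation. Hence it suffices to work with pairs $(X, Y)$ satisfying $a \notin Y$.

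For such $(X, Y)$, writing $X' = X \cup \{a\}$ and $Y' = Y \cup \{a\}$, the heart of the argument is the bridging identity
\[
(X, Y) \models P^- \iff (X, Y) \models P \text{ and } (X', Y') \models P,
\]
together with its counterpart for $Q$. I would prove the bridge rule by rule: since $a \in I_k$ and independent sets are pairwise disjoint, each $r \in P$ falls into exactly one of the seven row types of Table~\ref{tab:sat-s-dl}. In rows $I_0, I_1, I_4, I_5$ the table yields $(X, Y) \models r^-$ iff $(X, Y) \models r$, while $(X', Y') \models r$ is either trivially forced (the presence of $a$ in $h(r)$ or $nb(r)$ makes $Y' \models r$ and empties its reduct) or equivalent to $(X, Y) \models r$ (the irrelevant-atom case $I_0$). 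In row $I_2$, $(X, Y) \models r$ is trivially forced (because $a \in pb(r)$ and $a \notin Y$ give $pb(r) \not\subseteq Y$), whereas $(X, Y) \models r^-$ is equivalent to $(X', Y') \models r$. In rows $I_3, I_6, I_7$ every satisfaction involved is automatic. Conjoining these local equivalences over all rules of $P$ yields the bridge.

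Given the bridge and its counterpart for $Q$, the lemma is immediate: by $P \semiseq Q$ and Theorem~\ref{thm:se-ht-models} we have $HT^m(P) = HT^m(Q)$, so both $(X, Y) \models P \Leftrightarrow (X, Y) \models Q$ and $(X', Y') \models P \Leftrightarrow (X', Y') \models Q$; combining with the bridge gives $(X, Y) \models P^- \Leftrightarrow (X, Y) \models Q^-$ for every $(X, Y)$ with $a \notin Y$, hence $HT^m(P^-) = HT^m(Q^-)$ and thus $P^- \semiseq Q^-$.

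The main obstacle is the disciplined verification of the bridge, especially in row $I_2$, where neither side of the local equivalence is automatic. One must unfold the \lpmlnend HT-semantics (tracking the reduct $\{r\}_Y$ before the GL-reduct $\{r\}^Y$) and check that removing $a$ from $pb(r)$ exactly mirrors the shift from $Y$ to $Y'$ and from $X$ to $X'$. This hinges on the fact that $a \in I_k$ occupies only the positive body of $r$ in that row, so every other set in $TS(\langle P, Q \rangle)$ is unchanged when moving from $(X, Y)$ to $(X', Y')$; the condition $|I_k| \geq 3$ plays only a structural role, ensuring that $I_k$ survives the deletion and the remaining independent sets of $\langle P^-, Q^- \rangle$ still coincide with those of $\langle P, Q \rangle$ modulo $I_k$.
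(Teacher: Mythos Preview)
Your proof is correct and rests on the same core insight as the paper's: reduce to HT-interpretations $(X,Y)$ with $a\notin Y$ via Proposition~\ref{prop:ht-model-irrelevant-atoms}, then exploit Table~\ref{tab:sat-s-dl}, with row~$I_2$ as the pivotal case handled through the shifted pair $(X',Y')=(X\cup\{a\},Y\cup\{a\})$.

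The organization differs in a useful way. The paper splits into two cases: either no row-$I_2$ rule fails under $(X,Y)$ after deletion (Case~1, giving a direct rule-by-rule equivalence), or some such rule $r\in P$ fails (Case~2, where one constructs $(X^+,Y^+)$, uses $P\semiseq Q$ to find a failing $t\in Q$, and then pushes the failure back to $(X,Y)\not\models t^-$). Your bridge identity
\[
(X,Y)\models P^- \iff \bigl[(X,Y)\models P \text{ and } (X',Y')\models P\bigr]
\]
absorbs both cases into a single symmetric statement, so the final step is a clean Boolean combination rather than an asymmetric chase from $P$ to $Q$. This is a genuine simplification in presentation.

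One small imprecision: your closing remark that $|I_k|\geq 3$ plays ``only a structural role'' understates its use. In rows $I_3,I_6,I_7$ your claim that $(X,Y)\models r^-$ is automatic relies on $r^-$ still satisfying TAUT/CONTRA, which needs $|I_k-\{a\}|\geq 1$, i.e.\ $|I_k|\geq 2$. This is why the same argument also proves the SE-preserving half of Theorem~\ref{thm:s-rd} for S-RD-2; the threshold $|I_k|\geq 3$ is only essential for the NSE-preserving direction (Lemma~\ref{lem:s-dl-nse-preserving}). This does not affect the validity of your argument here, since S-DL presupposes $|I_k|\geq 3$.
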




\begin{proof}
Let $T = \langle P, Q \rangle$ be a pair of logic programs such that $P \semise Q$ and set $I$ an independent set w.r.t. $T$ such that $|I| \geq 3$. 
Suppose independent set $\sdel{I} = \opsdel{I, a} = I - \{a\}$ and $\sdel{T} = \langle \sdel{P}, \sdel{Q} \rangle = \opsdel{T, I, a}$. 
For a rule $r$ in $T$, we use $\sdel{r}$ to denote the counterpart of $r$ in $\sdel{T}$.
Since the atom $a$ does not occur in $\sdel{P}$ and $\sdel{Q}$, 
by Proposition \ref{prop:ht-model-irrelevant-atoms}, we only need to investigate the HT-interpretation $(X, Y)$ such that $a \not\in Y$. 
By Table \ref{tab:sat-s-dl}, there are two main kinds of cases. 

\case{1}
If for arbitrary rule $r$ of $T$ such that $I \sqsubseteq I_2(r)$, we have $(X, Y) \models \sdel{r}$. 
By Table \ref{tab:sat-s-dl}, it is easy to check that $(X, Y) \models r$ iff $(X, Y) \models \sdel{r}$ for any rule $r$ of $T$. 
Since the programs $P$ and $Q$ are semi-strongly equivalent, we have $\lse{P} = \lse{Q}$ by Theorem \ref{thm:se-ht-models}, 
which means $\lse{\sdel{P}} = \lse{\sdel{Q}}$. 
Therefore, the programs $\sdel{P}$ and $\sdel{Q}$ are semi-strongly equivalent. 

\case{2}
If there exists a rule $r$ of $T$ such that $I \sqsubseteq I_2(r)$ and $(X, Y) \not\models \sdel{r}$, 
we show $(X, Y)$ is not an HT-model of the programs $\sdel{P}$ and $\sdel{Q}$.
Without loss of generality, suppose $r$ is an \lpmln rule such that $r \in P$.
Since $I \sqsubseteq I_2(r)$ and $|I| \geq 3$, we have 
$I \cap \hr{r} = \emptyset$, $I \subseteq \pbr{r}$, $I \cap \nbr{r} = \emptyset$, and $|\sdel{I}| \neq \emptyset$.
Since $(X, Y) \not\models \sdel{r}$, we have 
$\hr{\sdel{r}} \cap X = \emptyset$, $\hr{\sdel{r}} \cap Y \neq \emptyset$, 
$\pbr{\sdel{r}} \subseteq X \subseteq Y$, 
and $\nbr{\sdel{r}} \cap Y = \emptyset$, 
which means $\sdel{I} \subseteq X \subseteq Y$. 
Let $\sadd{Y} = Y \cup \{a\}$ and $\sadd{X} = X \cup \{a\}$, it is easy to check that $\seintadd{X}{Y} \not\models r$. 
Since $P$ and $Q$ are semi-strongly equivalent, there must exist a rule $t \in Q$ such that $(\sadd{X}, \sadd{Y}) \not\models t$, 
which means 
$\hr{t} \cap \sadd{Y} \neq \emptyset$, 
$\hr{t} \cap \sadd{X} = \emptyset$, 
$\pbr{t} \subseteq \sadd{X} \subset \sadd{Y}$,  and 
$\nbr{t} \cap \sadd{Y} = \emptyset$. 
Since $I \subseteq \sadd{X} \subseteq \sadd{Y}$, we have $I \cap \hr{t} = \emptyset$ and $I \cap \nbr{t} = \emptyset$. 
Either $I \subseteq \pbr{t}$ or $I \cap \pbr{t} = \emptyset$, we have $\pbr{\sdel{t}} \subseteq X \subseteq Y$, 
therefore, we have $(X, Y) \not\models \sdel{t}$, 
which means $(X, Y)$ is not an HT-model of the programs $\sdel{P}$ and $\sdel{Q}$. 

Combining above results, 
for an HT-interpretation $(X, Y)$ such that $a \not\in Y$ and a rule $r$, 
if $(X, Y) \models r$ and $(X, Y) \not\models \sdel{r}$, 
$(X, Y)$ is not an HT-model of the programs $\sdel{P}$ and $\sdel{Q}$; 
otherwise, we have $(X, Y) \models r$ iff $(X, Y) \models \sdel{r}$. 
Since the programs $P$ and $Q$ are semi-strongly equivalent, it is obvious that programs $\sdel{P}$ and $\sdel{Q}$ are also semi-strongly equivalent, 
Lemma \ref{lem:s-dl-se-preserving} is proven. 
\end{proof}

\begin{lemma}
\label{lem:s-dl-nse-preserving}
The S-DL transformation is NSE-preserving in \lpmlnend. 
\end{lemma}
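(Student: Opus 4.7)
My plan is to prove the lemma directly: given a witness of $P \not\semiseq Q$, I will construct a witness of $\sdel{P} \not\semiseq \sdel{Q}$, using the HT-model characterization of Theorem \ref{thm:se-ht-models} throughout. Since $P \not\semiseq Q$, I will fix an HT-interpretation $(U, V)$ with, without loss of generality, $(U, V) \models P$ and $(U, V) \not\models Q$, and a rule $t \in Q$ with $(U, V) \not\models t$. The S-DL transformation removes $a$ from every i-set containing elements of $I$, so $a$ does not occur in $\sdel{T}$; by Proposition \ref{prop:ht-model-irrelevant-atoms} it suffices to find a separating HT-interpretation $(X, Y)$ with $a \notin Y$.

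My candidate $(X, Y)$ will be obtained by deleting $a$ from $(U, V)$, split into three cases by the position of $a$: set $(X, Y) = (U, V)$ if $a \notin V$; set $(X, Y) = (U, V \setminus \{a\})$ if $a \in V \setminus U$; and set $(X, Y) = (U \setminus \{a\}, V \setminus \{a\})$ if $a \in U$. These three configurations correspond exactly to the three column-pairs $(X, Y)$, $(X, Y')$, $(X', Y')$ of Table \ref{tab:sat-s-dl}. Using the deterministic (non-$*$, non-dash) entries of the table, I will propagate $(U, V) \models r$ to $(X, Y) \models \sdel{r}$ for every $r \in P$ and thereby conclude $(X, Y) \models \sdel{P}$, and dually propagate $(U, V) \not\models t$ to $(X, Y) \not\models \sdel{t}$ and thereby conclude $(X, Y) \not\models \sdel{Q}$. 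Applying Theorem \ref{thm:se-ht-models} once more will yield $\sdel{P} \not\semiseq \sdel{Q}$.

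The main obstacle will be the $*$-cells of Table \ref{tab:sat-s-dl}, which appear on the ``$\models r$'' side at $I \sqsubseteq I_2(r)$ in column-pair~1 and at $I \sqsubseteq I_1(r), I_5(r)$ in column-pairs~2 and 3, and where the table alone does not determine the satisfaction of $\sdel{r}$ at $(X, Y)$ from the satisfaction of $r$ at $(U, V)$. The decisive hypothesis is $|I| \geq 3$, which gives $|\sdel{I}| \geq 2$: $\sdel{I}$ remains a non-trivial independent set within $\sdel{T}$ and contains an atom $a' \neq a$ occupying exactly the same i-sets of $\sdel{r}$ as $I$ did in $r$. I will use $a'$ as a structural surrogate for $a$: unfolding the HT-satisfaction conditions shows that whenever $(U, V)$ satisfied $r$ through $a$ (for example via $a \in \pbr{r}$ with $a \notin V$, making $\pbr{r} \not\subseteq V$ trivially), an analogous witness for $(X, Y) \models \sdel{r}$ can be produced from $a'$, or, when $a'$ fails to do so, by case-splitting on the residual HT-satisfaction conditions. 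In the few remaining configurations where this surrogate argument still fails, I will mirror Case~2 of the proof of Lemma \ref{lem:s-dl-se-preserving} in reverse to show that the forced structure of $(X, Y)$ in such configurations is incompatible with both $(U, V) \models P$ and $(U, V) \not\models t$, so no such configuration actually arises. The delicacy of this bookkeeping across the three positional cases of $a$ is the most technically demanding step.
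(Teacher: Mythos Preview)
Your direct ``delete $a$ from the witness'' construction has a genuine gap: the candidate $(X,Y)$ can fail to satisfy $\sdel{P}$, and your fallback claim that the problematic configurations ``do not actually arise'' is false. Here is a concrete instance in the case $a\notin V$, $I\sqsubseteq I_2(r)$. Take $I=\{a,b,c\}$, $P=\{r\}$ with $r:\ d\leftarrow a,b,c$, and $Q=\{t\}$ with $t:\ d.$ (so $I\sqsubseteq I_2(r)\cap I_0(t)$). With $(U,V)=(\{b,c\},\{b,c,d\})$ one checks $(U,V)\models P$ and $(U,V)\not\models Q$, yet your $(X,Y)=(U,V)$ gives $(X,Y)\not\models \sdel{r}$ since $\pbr{\sdel r}=\{b,c\}\subseteq X$ while $d\notin X$. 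Thus $(X,Y)\not\models \sdel P$ and it is not a separating HT-interpretation. Your surrogate-atom idea cannot rescue this: every atom of $\sdel I$ already lies in $U$, so no $a'\in\sdel I$ witnesses $\pbr{\sdel r}\not\subseteq X$.

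What the paper does in exactly this situation is to \emph{change} the witness rather than reuse it. When $\sdel I\subseteq U$, it picks $b\in I\setminus\{a\}$ and swaps $b$ for $a$: $X'=(U\setminus\{b\})\cup\{a\}$, $Y'=(V\setminus\{b\})\cup\{a\}$. The hypothesis $|I|\ge 3$ is used precisely here: it guarantees $|\sdel I|\ge 2$, so after the swap $\sdel I\cap X'\neq\emptyset$ and $\sdel I\not\subseteq X'$ simultaneously. This ``half-in, half-out'' position of $\sdel I$ makes every rule $s$ with $I\sqsubseteq I_i(s)$, $i\neq 0$, trivially satisfied by $(X',Y')$ (and the $I_0$ rules are unchanged by Proposition~\ref{prop:ht-model-irrelevant-atoms}), which is what lets one conclude both $(X',Y')\models \sdel P$ and $(X',Y')\not\models \sdel Q$. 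Your plan needs this swap construction (or an equivalent modification of the witness) in the $*$-cell cases; simply deleting $a$ is not enough.
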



\begin{proof}
Let $T = \langle P, Q \rangle$ be a pair of logic programs such that $P \not\semise Q$ and set $I$ an independent set w.r.t. $T$ such that $|I| \geq 3$. 
Suppose independent set $\sdel{I} = \opsdel{I, a} = I - \{a\}$ and $\sdel{T} = \langle \sdel{P}, \sdel{Q} \rangle = \opsdel{T, I, a}$. 
For a rule $r$ in $T$, we use $\sdel{r}$ to denote the counterpart of $r$ in $\sdel{T}$. 

We use proof by contradiction. 
Assume the programs $\sdel{P}$ and $\sdel{Q}$ are semi-strongly equivalent. 
Without loss of the generality, suppose $(X, Y)$ is an HT-interpretation such that $(X, Y) \models P$ and $(X, Y) \not\models Q$. 
Since $\sdel{P}$ and $\sdel{Q}$ are semi-strongly equivalent, there are two kinds of cases w.r.t. the HT-interpretation $(X, Y)$, i.e. (1) $(X, Y) \not\models \sdel{P}$ and $(X, Y) \not\models \sdel{Q}$, and (2) $(X, Y) \models \sdel{P}$ and $(X, Y) \models \sdel{Q}$. 
According to the relationships among the atom $a$ and interpretations $X$ and $Y$, there are three kinds of cases: (1) $a \not\in Y$, (2) $a \not\in X$ and $a \in Y$, (3) $a \in X$. 
Therefore, the proof is divided into six main cases. 
Since the complete proof is tedious, we only show the proof of a representative case for brevity, the proofs of other cases are similar. 

\case{1}
Suppose $a \not\in Y$, $(X, Y) \not\models \sdel{P}$, and $(X, Y) \not\models \sdel{Q}$,
there is a rule $r \in P$  such that $(X, Y) \models r$ and $(X, Y) \not\models \sdel{r}$. 
By Table \ref{tab:sat-s-dl}, the relationship between $I$ and the independent sets of $r$ is $I \sqsubseteq I_2(r)$, 
which means $\setdjemp{I}{\hr{r}}$, $\setsubeq{I}{\pbr{r}}$, and $\setdjemp{I}{\nbr{r}}$. 
Since $(X, Y) \not\models \sdel{r}$, 
we have $\hr{\sdel{r}} \cap X = \emptyset$, $\hr{\sdel{r}} \cap Y \neq \emptyset$, $\pbr{\sdel{r}} \subseteq X \subseteq Y$, and $\nbr{\sdel{r}} \cap Y = \emptyset$, 
which means $\sdel{I} \subseteq X \subseteq Y$. 
Suppose $b \in I$ and $b \neq a$, let $X' = (X - \{b\}) \cup \{a\}$ and $Y' = (Y - \{b\}) \cup \{a\}$, 
since $|I| \geq 3$, we have $\sdel{I} \cap X' \neq \emptyset$ and $\sdel{I} \not\subseteq X'$.
In rest of the proof, 
we firstly show $(X', Y') \models P$ and $(X', Y') \not\models Q$.
Then, we show $(X', Y') \models \sdel{P}$ and $(X', Y') \not\models \sdel{Q}$. 

Firstly, since $\sdel{I} \subseteq X \subseteq Y$ and $a \not\in Y$, we have $(X, Y) \models t$ for any rule $t$ such that $I \sqsubseteq I_i(t)$ and $i \neq 0$. 
Since $\sdel{I} \cap X' \neq \emptyset$ and $\sdel{I} \not\subseteq X'$, 
it is easy to check that $(X', Y') \models t$ for any rule $t$ such that $I \sqsubseteq I_i(t)$ and $i \neq 0$. 
For the rule $t$ such that $I \sqsubseteq I_0(r)$, since the atoms of $I$ do not appear in $t$, by Proposition \ref{prop:ht-model-irrelevant-atoms}, we have $(X, Y) \models t$ iff $(X', Y') \models t$. 
Combining above results, we have shown that $(X, Y) \models t$ iff $(X', Y') \models t$ for any rule $t$ in the tuple $T$, which means $(X', Y') \models P$ and $(X', Y') \not\models Q$.

Secondly, 
for rule $t$ of $T$ such that $I \sqsubseteq I_0(t)$, 
since $t = \sdel{t}$, we have $(X', Y') \models t$ iff $(X', Y') \models \sdel{t}$. 
For rule $t$ of $T$ such that $I \sqsubseteq I_(t)$ and $i \neq 0$, 
since $\sdel{I} \cap X' \neq \emptyset$ and $\sdel{I} \not\subseteq X'$, 
it is easy to check that $(X', Y') \models \sdel{t}$. 
Above results show that $(X', Y') \models t$ iff $(X', Y') \models \sdel{t}$ for any rule $t$ in the tuple $T$, 
which means $(X', Y') \models \sdel{P}$ and $(X', Y') \not\models \sdel{Q}$.
By Theorem \ref{thm:se-ht-models}, we have $\sdel{P} \not\semise \sdel{Q}$, which contradicts with the assumption. 

\textbf{Others. }
In other cases, it can be shown that either the case does not exist or $\sdel{P} \not\semise \sdel{Q}$, 
therefore, Lemma \ref{lem:s-dl-nse-preserving} is proven. 
\end{proof}

In the proof of Lemma \ref{lem:s-dl-nse-preserving}, 
it can be observed that $|I| \geq 3$ is a critical condition to guarantee the NSE-preserving property of the S-DL transformation, 
which explains why we distinguish two kinds of transformations of deleting atoms. 
Combining Lemma \ref{lem:s-dl-se-preserving} and \ref{lem:s-dl-nse-preserving}, we have shown that the S-DL transformation is SE-preserving and NSE-preserving, 
which is shown in Theorem \ref{thm:s-dl-se-nse-preserving}.
\begin{theorem}
\label{thm:s-dl-se-nse-preserving}
The S-DL transformation is SE-preserving and NSE-preserving in \lpmlnend. 
\end{theorem}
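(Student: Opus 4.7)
The plan is to obtain Theorem \ref{thm:s-dl-se-nse-preserving} as an immediate corollary by conjoining Lemma \ref{lem:s-dl-se-preserving} and Lemma \ref{lem:s-dl-nse-preserving}, which have just been established. Since SE-preservation and NSE-preservation are precisely the two conjuncts that Definition \ref{def:se-preserving} demands, no additional argument is required beyond citing the two lemmas; the proof will literally consist of one sentence of the form ``By Lemma \ref{lem:s-dl-se-preserving} and Lemma \ref{lem:s-dl-nse-preserving}, the S-DL transformation is both SE-preserving and NSE-preserving in \lpmlnend.''

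Although no further work remains at the level of the theorem, it is worth sketching the common strategy shared by the two lemmas, since that strategy sets the template for the rest of Section 3. The unifying idea is to fix the tuple $T = \langle P, Q \rangle$, invoke Theorem \ref{thm:se-ht-models} to reduce (non-)semi-strong equivalence to (in)equality of HT-model sets, apply Proposition \ref{prop:ht-model-irrelevant-atoms} to restrict attention to HT-interpretations $(X, Y)$ with $a \notin Y$ (since $a$ disappears from $\sdel{T}$), and then trace the effect of the S-DL step through the exhaustive case analysis recorded in Table \ref{tab:sat-s-dl}. The cardinality constraint $|I| \geq 3$ is critical because it ensures $|\sdel{I}| \geq 2$, so that after deletion there are still enough ``witness'' atoms in $\sdel{I}$ to force $\pbr{\sdel{r}} \subseteq X$ (or analogous containments) whenever the counterpart rule $\sdel{r}$ is falsified, permitting HT-witnesses of non-equivalence to be transported between $T$ and $\sdel{T}$.

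The real obstacle is therefore not Theorem \ref{thm:s-dl-se-nse-preserving} itself but lies inside Lemma \ref{lem:s-dl-nse-preserving}, where one must dispatch six subcases determined by (i) which side of the presumed equivalence $(X, Y)$ witnesses, and (ii) the position of $a$ relative to $X$ and $Y$. In each subcase one must exhibit a modified HT-interpretation (of the kind $X' = (X - \{b\}) \cup \{a\}$ used in Case~1) that both distinguishes $P$ from $Q$ and survives the passage to $\sdel{P}$ and $\sdel{Q}$, contradicting the assumed equivalence of the reduced programs. Once these case analyses are digested, Theorem \ref{thm:s-dl-se-nse-preserving} drops out as a one-line corollary and serves as the pattern to be repeated for the S-AD, S-RD, and S-EX transformations summarized in Table \ref{tab:properties-s-*-summary}.
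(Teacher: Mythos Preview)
Your proposal is correct and matches the paper's own treatment exactly: the paper states that ``Combining Lemma \ref{lem:s-dl-se-preserving} and \ref{lem:s-dl-nse-preserving}, we have shown that the S-DL transformation is SE-preserving and NSE-preserving,'' with no further argument for Theorem \ref{thm:s-dl-se-nse-preserving} itself. Your additional commentary on the role of $|I|\geq 3$ and the six-subcase structure of Lemma \ref{lem:s-dl-nse-preserving} is accurate and aligns with the paper's proofs of those lemmas.
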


\textbf{S-RD Transformation in \lpmlnend.}
For the S-RD transformation, we discuss the S-RD-1 and S-RD-2 transformations, respectively, 
which is shown in Theorem \ref{thm:s-rd}.

\begin{theorem}
\label{thm:s-rd}
In \lpmlnend, the S-RD-1 transformation is neither SE-preserving nor NSE-preserving; 
and the S-RD-2 transformation is SE-preserving but not NSE-preserving.
\end{theorem}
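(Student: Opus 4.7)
The plan is to split the theorem into four independent claims: S-RD-1 is not SE-preserving, S-RD-1 is not NSE-preserving, S-RD-2 is SE-preserving, and S-RD-2 is not NSE-preserving. The negative claims will be established by exhibiting small \lpmln-program pairs, and the positive one by adapting the case analysis of Lemma \ref{lem:s-dl-se-preserving}. The supporting tools throughout are Theorem \ref{thm:se-ht-models}, Proposition \ref{prop:ht-model-irrelevant-atoms}, and the SE-condition of Equation \eqref{eq:lpmln-010-se-condition}.

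For the two S-RD-1 failures, I would use very small witnesses. For non-preservation of SE, take $P = \{a \leftarrow b, ~not~ b\}$ and $Q = \emptyset$; the LPMLN SE-condition makes $P \semise Q$, yet the atom $b$ is the sole element of the independent set $I_3(r) = \pbr{r} \cap \nbr{r} - \hr{r} = \{b\}$, and an S-RD-1 turns $r$ into the fact $a$, so that $P^\ominus = \{a\}$ and $Q^\ominus = \emptyset$ are separated by the empty interpretation. For non-preservation of NSE, take $P = \{a\}$ and $Q = \{a \leftarrow b\}$; they are not semi-strongly equivalent, yet the only independent set of $\langle P, Q\rangle$ containing $b$ is a singleton, and after S-RD-1 both programs reduce to $\{a\}$, which is trivially semi-strongly equivalent to itself.

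For S-RD-2 being SE-preserving, I would mirror the structure of the proof of Lemma \ref{lem:s-dl-se-preserving} and check that every invocation of $|I| \geq 3$ there can be weakened to $|I| \geq 2$. The same satisfiability table governs the reasoning once restricted to an HT-interpretation $(X, Y)$ with $a \notin Y$, where $a$ is the atom being deleted. Case 1 — where $(X, Y) \models r$ coincides with $(X, Y) \models \sdel{r}$ for every rule $r$ in the tuple — inherits semi-strong equivalence directly from $P \semise Q$ through Theorem \ref{thm:se-ht-models}. In Case 2, where some rule $r$ with $I \sqsubseteq I_2(r)$ has $(X, Y) \not\models \sdel{r}$, the proof deduces $\sdel{I} \subseteq X \subseteq Y$, and here $|\sdel{I}| = 1$ is still enough to lift $(X, Y)$ to $(X \cup \{a\}, Y \cup \{a\})$, locate a witnessing rule $t \in Q$ via $P \semise Q$, and conclude that $\sdel{t}$ refutes $(X, Y)$; the set-theoretic bookkeeping on $\hr{\sdel{t}}$, $\pbr{\sdel{t}}$, and $\nbr{\sdel{t}}$ is identical to that in the S-DL proof.

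The main obstacle is the fourth claim, that S-RD-2 is not NSE-preserving. Because $I = \{a, b\}$ being an independent set forces $a$ and $b$ to occupy identical positions in $\langle P, Q\rangle$, the programs act symmetrically on $\{a, b\}$, and Proposition \ref{prop:ht-model-irrelevant-atoms} makes both $\lse{P^\ominus}$ and $\lse{Q^\ominus}$ closed under adding or removing $b$. A valid counterexample must therefore place the distinguishing HT-interpretations so that the passage to $P^\ominus$ and $Q^\ominus$ eliminates the asymmetry between them. My plan is to construct such a pair by leveraging the LPMLN-specific $\hr{r} \subseteq \nbr{r}$ clause of Equation \eqref{eq:lpmln-010-se-condition}, which can turn a non-trivial rule into a semi-SE-to-empty rule under the reduction induced by deleting one atom of $I$; once a candidate pair is in hand, verification reduces to a finite HT-model enumeration on the involved atoms, guided by the automated framework developed later in the paper. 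This verification, rather than any deep idea, is where the bulk of the care will be needed.
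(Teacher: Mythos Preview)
Your treatment of the first three claims is sound and aligns with the paper. The counterexamples you give for S-RD-1 differ from the paper's (Example~\ref{ex:s-rd-1-general} uses $I_6$ on $a\vee c\leftarrow b,c$ for the SE failure, and $I_4(r_1)\cap I_0(r_2)$ on $r_1=a\vee b$, $r_2=b$ for the NSE failure), but yours work just as well. Your argument that S-RD-2 is SE-preserving---rerunning the Case~2 analysis of Lemma~\ref{lem:s-dl-se-preserving} and observing that the only appeal to $|I|\ge 3$ there is to ensure $\sdel{I}\neq\emptyset$, which $|I|=2$ already gives---is exactly what the paper does.

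The gap is in the fourth claim. You do not produce a witness; you only outline a search, and the heuristic you propose is misdirected. Your plan is to exploit the \lpmln-specific disjunct $\hr{r}\subseteq\nbr{r}$ of Equation~\eqref{eq:lpmln-010-se-condition} so that deleting one atom of $I$ turns some rule into a semi-valid one. But this cannot happen: the two atoms of a size-two independent set occupy identical positions in every rule of the tuple, so after removing one the other remains in exactly the same positions, and each single-rule independent set $I_i(r)$ is non-empty after the deletion iff it was non-empty before. Hence S-RD-2 never alters the semi-validity status of any rule. Moreover, by Theorem~\ref{thm:asp-s-transformations} the NSE failure of S-RD-2 already occurs in ASP, so the mechanism cannot be \lpmln-specific; indeed the paper's witness (Example~\ref{ex:s-rd-2}) contains no default negation at all. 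That witness takes $P=\{a\vee c.,\ b.\}$ against $Q=\{a\vee b\vee c.,\ a\vee c\leftarrow b.,\ b\leftarrow a,c.\}$, with two-element independent set $\{a,c\}$; the HT-interpretation $(\{a\},\{a,b\})$ separates $P$ from $Q$, yet after deleting $c$ the reduced programs $\{a.,\ b.\}$ and $\{a\vee b.,\ a\leftarrow b.,\ b\leftarrow a.\}$ admit only total HT-models and therefore coincide. The effect comes from the interaction of the disjunction $r_3$ with the two positive implications $r_4,r_5$, which collapses once the extra head atom $c$ is removed---a purely positive phenomenon that your search heuristic would not have found.
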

    
The proof of SE-preserving property of S-RD-2 in Theorem \ref{thm:s-rd} is basically the same as the proof of Lemma \ref{lem:s-dl-se-preserving}. 
Example \ref{ex:s-rd-1-general} shows that the S-RD-1 transformation is neither SE-preserving nor NSE-preserving in general. 
And Example \ref{ex:s-rd-2} shows that the S-RD-2 transformation is not NSE-preserving. 

\begin{example}
\label{ex:s-rd-1-general}
Firstly, consider a tuple of \lpmln rules $T_1 = \langle r \rangle$, 
where $r$ is the rule ``$a \vee c \leftarrow b \vee c.$''. 
By Equation  \eqref{eq:lpmln-010-se-condition}, 
it is easy to check that the rule $r$ is semi-strongly equivalent to the empty program, i.e. $\{r\} \semise \emptyset$. 
For the tuple $T$, the independent set $I_6$ is
\begin{equation}
    I_6 = \hr{r} \cap \pbr{r} - \nbr{r} = \{c\}
\end{equation}
Let $\srd{T_1} = \langle \srd{r} \rangle = \opsrd{T_1, I_6, c}$, 
we have the rule $\srd{r}$ is ``$a \leftarrow b.$''.
One can check that $\{\srd{r}\} \not\semise \emptyset$, 
since $(\{b\}, \{a, b\}) \not\models \srd{r}$. 
Therefore, the S-RD-1 transformation is not SE-preserving. 

Secondly, consider a tuple of \lpmln rules $T_2 = \langle r_1, r_2 \rangle$, 
where rule $r_1$ is the rule ``$a \vee b. $'' and $r_2$ is the rule ``$b.$''. 
For the HT-interpretation $(X, Y) = (\{a\}, \{a, b\})$, 
it is easy to check that $(X, Y) \models r_1$ and $(X, Y) \not\models r_2$, 
therefore, we have $\{r_1\} \not\semise \{r_2\}$. 
For the tuple $T$, the independent set $I_{32}$ is 
\begin{equation}
I_{32} = I_4(r_1) \cap I_0(r_2) = \hr{r_1} - \left( \pbr{r_1} \cup \nbr{r_1} \cup \hr{r_2} \cup \pbr{r_2} \cup \nbr{r_2} \right) = \{a\}
\end{equation}
Let $\srd{T_2} = \langle \srd{r_1}, \srd{r_2} \rangle = \opsrd{T_2, I_{32}, a}$, 
we have the rule $\srd{r_1}$ is ``$b.$'' and the rule $\srd{r_2}$ is the same as $r_2$. 
It is obvious that $\srd{r_1}$ is semi-strongly equivalent to the rule $\srd{r_2}$. 
Therefore, the S-RD-1 transformation is not NSE-preserving. 
\end{example}

\begin{example}
\label{ex:s-rd-2}
Consider \lpmln programs $P = \langle r_1, r_2 \rangle$ and $Q = \langle r_3, r_4, r_5 \rangle $ 
\begin{equation*}
\begin{array}{lcc}
    P: & a \vee c. & (r_1)\\ & b.  & (r_2) \\ & &
\end{array}
~~~~~~
\begin{array}{lcc}
    Q: & a \vee b \vee c.  & (r_3)\\ & a \vee c \leftarrow b.  & (r_4) \\ & b \leftarrow a, c.  & (r_5)
\end{array}
~~\Rightarrow~~
\begin{array}{lcc}
    \srd{P}: & a. & (\srd{r_1})\\ & b.  & (\srd{r_2}) \\ & &
\end{array}
~~~~~~
\begin{array}{lcc}
    \srd{Q}: & a \vee b.  & (\srd{r_3})\\ & a \leftarrow b.  & (\srd{r_4}) \\ & b \leftarrow a.  & (\srd{r_5})
\end{array}
\end{equation*}
For an HT-interpretation $(X, Y) = (\{a\}, \{a, b\})$, 
it is easy to check that $(X, Y) \not\models P$ and $(X, Y) \models Q$, 
which means the programs $P$ and $Q$ are not semi-strongly equivalent.
For the tuple $T = \langle P, Q \rangle$, there are only two non-empty independent sets $I_{16674}$ and $I_{2324}$ as follows
\begin{eqnarray}
I_{16674} = I_4(r_1) \cap I_0(r_2) \cap I_4(r_3) \cap I_4(r_4) \cap I_2(r_5) =  \{a, c\} \\
I_{2324} = I_0(r_1) \cap I_4(r_2) \cap I_4(r_3) \cap I_2(r_4) \cap I_4(r_5)  = \{b\}
\end{eqnarray}
Let $\srd{T} = \langle \srd{P}, \srd{Q} \rangle = \opsrd{T, I_{16674}, c}$, 
one can check that only the total HT-interpretations are HT-models of the programs $\srd{P}$ and $\srd{Q}$, 
which means the programs $\srd{P}$ and $\srd{Q}$ are semi-strongly equivalent. 
Therefore, the S-RD-2 transformation is not NSE-preserving. 
\end{example}

Above results show that singleton independent sets play a different role from other non-empty independent sets. 
For a tuple $T$ of logic programs, 
we use $\sis{T}$ to denote the set of names of all singleton independent sets of $T$. 

\textbf{S-AD and S-EX Transformations in \lpmlnend.}
For the S-AD and S-EX transformation, it is worth noting that S-AD is the inverse of S-DL and S-EX is the inverse of S-RD, 
therefore, the SE-preserving and NSE-preserving results of S-AD and S-EX can be derived from Theorem \ref{thm:s-dl-se-nse-preserving} and \ref{thm:s-rd} straightforwardly, 
which is shown in Theorem \ref{thm:s-ad-se-nse-preserving} and \ref{thm:s-ex}.

\begin{theorem}
\label{thm:s-ad-se-nse-preserving}
The S-AD transformation is SE-preserving and NSE-preserving in \lpmlnend. 
\end{theorem}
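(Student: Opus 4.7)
The plan is to exploit the fact that the S-AD transformation is essentially the inverse of the S-DL transformation, and to reduce both claims to Theorem~\ref{thm:s-dl-se-nse-preserving} via proof by contradiction. Specifically, if I apply an S-AD transformation $\opsadd{I_k, a'}$ to a tuple $T = \langle P, Q \rangle$ of logic programs (where $|I_k| \geq 2$ and $a' \notin at(T)$), the resulting independent set $\sadd{I_k} = I_k \cup \{a'\}$ has size at least $3$, contains $a'$, and carries the same i-set signature as $I_k$ (since $a'$ does not occur in any set of $TS(\sadd{T})$ other than the i-sets of $I_k$). Consequently, the S-DL transformation $\opsdel{\sadd{I_k}, a'}$ is applicable to $\sadd{T}$ and recovers $T$ exactly; that is, $\opsdel{\opsadd{T, I_k, a'}, \sadd{I_k}, a'} = T$ at the level of both independent sets and rule tuples.

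For the SE-preserving claim, I would assume $P \semise Q$ and suppose for contradiction that $\sadd{P} \not\semise \sadd{Q}$, where $\langle \sadd{P}, \sadd{Q} \rangle = \opsadd{T, I_k, a'}$. By the NSE-preserving half of Theorem~\ref{thm:s-dl-se-nse-preserving} applied to the tuple $\sadd{T}$ together with the independent set $\sadd{I_k}$ (whose size is $\geq 3$ as required by S-DL), removing $a'$ would yield non-semi-strongly-equivalent programs; but by the inverse identity above these programs are precisely $P$ and $Q$, contradicting $P \semise Q$.

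For the NSE-preserving claim, I would assume $P \not\semise Q$ and suppose for contradiction that $\sadd{P} \semise \sadd{Q}$. Applying the SE-preserving half of Theorem~\ref{thm:s-dl-se-nse-preserving} to $\langle \sadd{P}, \sadd{Q} \rangle$ with the S-DL transformation that deletes $a'$ from $\sadd{I_k}$ would produce two semi-strongly-equivalent programs, which, by the inverse identity, are again $P$ and $Q$. This contradicts $P \not\semise Q$.

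The only genuine obstacle, which is quite minor, is bookkeeping: I must verify that the preconditions of S-DL ($|\sadd{I_k}| \geq 3$ and $a' \in \sadd{I_k}$) are indeed met after S-AD, and that the independent-set labeling is preserved under the inverse composition so that ``deleting $a'$ from $\sadd{I_k}$'' in $\sadd{T}$ acts as a well-defined S-DL transformation yielding exactly $T$. Once this correspondence is pinned down, both preservation properties follow immediately from Theorem~\ref{thm:s-dl-se-nse-preserving}, and, since the characterization of $\semise$ via $\lse{\cdot}$ in Theorem~\ref{thm:se-ht-models} is the only semantic ingredient invoked (already inside Theorem~\ref{thm:s-dl-se-nse-preserving}), no further HT-model reasoning is required here.
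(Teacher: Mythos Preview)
Your proposal is correct and matches the paper's own argument: the paper simply remarks that S-AD is the inverse of S-DL and that the SE-preserving and NSE-preserving properties of S-AD therefore follow directly from Theorem~\ref{thm:s-dl-se-nse-preserving}. Your contrapositive reduction and the bookkeeping check that $|\sadd{I_k}| \geq 3$ make this explicit, but the approach is the same.
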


\begin{theorem}
\label{thm:s-ex}
In \lpmlnend, the S-EX-0 transformation is neither SE-preserving nor NSE-preserving; 
and the S-EX-1 transformation is NSE-preserving but not SE-preserving.
\end{theorem}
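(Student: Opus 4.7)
\noindent\textbf{Proof proposal for Theorem \ref{thm:s-ex}.}

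The plan is to exploit the observation that S-EX and S-RD are mutually inverse operations on tuples of logic programs, and then transfer Theorem \ref{thm:s-rd} through this duality. Concretely, adding a fresh atom $a'$ to an empty independent set $I_k$ yields the singleton $\{a'\}$, and applying S-RD-1 to delete $a'$ from this singleton restores $T$; hence S-EX-0 is the inverse of S-RD-1. Symmetrically, extending a singleton $I_k = \{b\}$ by a fresh atom $a'$ produces $\{b, a'\}$, and the S-RD-2 operation deleting $a'$ brings the tuple back; so S-EX-1 is the inverse of S-RD-2. In both cases the fresh atom $a'$ inserted by S-EX exactly matches the atom to be removed by the corresponding S-RD, so the round-trip is the identity on the tuple $T$.

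The first step is to state and prove a general duality lemma: for any two transformations $\Gamma$ and $\Gamma^{-1}$ on pairs of logic programs such that $\Gamma^{-1}(\Gamma(T)) = T$, the transformation $\Gamma$ is SE-preserving if and only if $\Gamma^{-1}$ is NSE-preserving. This is just contraposition applied to Definition \ref{def:se-preserving}: SE-preservation reads $P \semise Q \Rightarrow \Gamma(P) \semise \Gamma(Q)$, whose contrapositive, after substituting $P' = \Gamma(P)$ and $Q' = \Gamma(Q)$ and using the inverse identity, is precisely the NSE-preservation of $\Gamma^{-1}$.

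The second step is then the routine translation. Theorem \ref{thm:s-rd} says S-RD-2 is SE-preserving, so by the lemma S-EX-1 is NSE-preserving; and S-RD-2 is not NSE-preserving, so S-EX-1 is not SE-preserving. Likewise, S-RD-1 is neither SE- nor NSE-preserving, so S-EX-0 is neither NSE- nor SE-preserving. Wherever the negative direction of the theorem is claimed, the counterexamples of Examples \ref{ex:s-rd-1-general} and \ref{ex:s-rd-2} can be read backwards: the tuple $\srd{T}$ of that example, together with the same target independent set (now empty or singleton, respectively) and the removed atom (now treated as the fresh atom $a'$ to be added), provides a witness that S-EX-0 or S-EX-1 fails to preserve the corresponding equivalence.

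The main obstacle, which is minor rather than deep, is verifying the admissibility conditions of the S-EX side carefully. In particular, one must check that the atom $a'$ being added by S-EX does not already occur in the input tuple; since the counterexamples were obtained by S-RD from a tuple in which the reduced atom is uniquely isolated inside a small independent set, this freshness is automatic. One should also ensure the cardinality bookkeeping matches: an S-EX-$i$ applied to a set of size $i$ produces a set of size $i+1$, which is exactly the starting size required by the corresponding S-RD-$(i{+}1)$. Once these matches are verified, the theorem follows from Theorem \ref{thm:s-rd} with no new model-theoretic argument.
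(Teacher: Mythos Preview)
Your proposal is correct and matches the paper's own argument essentially line for line: the paper simply remarks that ``S-EX is the inverse of S-RD, therefore the SE-preserving and NSE-preserving results of S-EX can be derived from Theorem~\ref{thm:s-rd} straightforwardly,'' and you have unpacked exactly that duality, including the use of Examples~\ref{ex:s-rd-1-general} and~\ref{ex:s-rd-2} read in reverse for the negative claims. One small point worth tightening when you write it out: your duality lemma as stated (only $\Gamma^{-1}(\Gamma(T))=T$) gives the $(\Leftarrow)$ direction cleanly, but the $(\Rightarrow)$ direction you actually need for ``S-RD-2 SE-preserving $\Rightarrow$ S-EX-1 NSE-preserving'' also uses that every tuple in the domain of S-EX-1 lies in the image of S-RD-2; this surjectivity is immediate (any atom in a singleton independent set is removable and re-insertable as fresh), but you should say it explicitly rather than leave it implicit in ``mutually inverse.''
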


\textbf{S-* Transformations in ASP.}
Finally, following the same idea of investigating the properties of S-* transformations in \lpmlnend, 
the SE-preserving and NSE-preserving results of S-* transformations in ASP can be obtained. 
Theorem \ref{thm:asp-s-transformations} shows the properties of S-* transformations for ASP programs are the same as the properties in \lpmlnend. 
\begin{theorem}
\label{thm:asp-s-transformations}
The SE-preserving and NSE-preserving results of S-* transformations in ASP are 
\begin{itemize}
    \item the S-RP, S-DL, and S-AD transformations are SE-preserving and NSE-preserving; 
    \item the S-RD-2 transformation is SE-preserving but not NSE-preserving;
    \item the S-EX-1 transformation is NSE-preserving but not SE-preserving; and
    \item the S-RD-1 and the S-EX-0 transformations are neither SE-preserving nor NSE-preserving.
\end{itemize}
\end{theorem}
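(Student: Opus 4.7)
The plan is to mirror the arguments developed for \lpmln, leveraging the ASP HT-model characterization (first bullet of Theorem \ref{thm:se-ht-models}) and Proposition \ref{prop:ht-model-irrelevant-atoms}, both of which apply uniformly. The key technical observation is that ASP HT-satisfiability asks for $Y \models r$ together with $X \models \{r\}^Y$, so the row-by-row analysis that produced Table \ref{tab:sat-s-dl} can be recreated for ASP with essentially the same case structure. Once the analogous table is in hand, the proofs of Theorems \ref{thm:s-rp-se-nse-preserving}, \ref{thm:s-dl-se-nse-preserving}, \ref{thm:s-rd}, \ref{thm:s-ad-se-nse-preserving}, and \ref{thm:s-ex} carry over, and the statement reduces to checking each transformation one by one.

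First I would handle S-RP: renaming a single atom in an independent set induces a bijection on HT-interpretations that commutes with ASP HT-satisfaction, hence a bijection between $\asse{P}$ and $\asse{\srep{P}}$, which yields both preservation properties. Next, for S-DL and its inverse S-AD, I would repeat the two-case argument of Lemma \ref{lem:s-dl-se-preserving}: when every rule remains satisfied after deletion iff it was satisfied before, SE-preservation is immediate; otherwise a rule $r$ with $I \sqsubseteq I_2(r)$ fails under deletion, and the hypothesis $|I| \geq 3$ lets us extend the witness to $\seintadd{X}{Y}$, obtain a matching failing rule in the other program by strong equivalence, and descend back. For NSE-preservation, the swap trick of Lemma \ref{lem:s-dl-nse-preserving} (replacing $a$ by another $b \in I$) again requires $|I| \geq 3$ so that $\sdel{I}$ is neither disjoint from nor contained in the swapped $X'$, and the contradiction goes through unchanged. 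The SE-preservation of S-RD-2 is the specialization of Lemma \ref{lem:s-dl-se-preserving} to $|I| = 2$.

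For the negative assertions, I would verify that the counterexamples in Examples \ref{ex:s-rd-1-general} and \ref{ex:s-rd-2} transfer to ASP. The rule $a \vee c \leftarrow b \vee c$ satisfies the ASP TAUT/CONTRA condition \eqref{eq:asp-010-se-condition} since $\hr{r} \cap \pbr{r} = \{c\} \neq \emptyset$, so $\{r\} \aspseq \emptyset$, whereas its S-RD-1 transform $a \leftarrow b$ violates \eqref{eq:asp-010-se-condition}; hence S-RD-1 is not SE-preserving. The pair $\langle a \vee b,\ b \rangle$ is separated in ASP by the HT-interpretation $(\{a\},\{a,b\})$ exactly as in \lpmln{}, and the S-RD-1 transform collapses both rules to $b.$, giving the failure of NSE-preservation for S-RD-1. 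A direct HT-model computation on the Example \ref{ex:s-rd-2} programs shows that $\srd{P}$ and $\srd{Q}$ have the same ASP HT-models while $P$ and $Q$ do not, so S-RD-2 is not NSE-preserving. Finally, since S-EX-1 inverts S-RD-2 and S-EX-0 inverts S-RD-1, each inherits the negated preservation profile of its inverse: S-EX-1 is NSE-preserving but not SE-preserving, and S-EX-0 is neither.

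The main obstacle will be the careful re-derivation of the ASP analogue of Table \ref{tab:sat-s-dl}, particularly the rows for $I_2(r)$ and $I_4(r)$, where deletion removes an atom from the positive body or the head and can toggle whether $Y \models r$. Once these entries are verified, the positive preservation results follow by essentially the same case analyses as in the \lpmln{} proofs, and the negative results are settled by the direct HT-model checks sketched above.
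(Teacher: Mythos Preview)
Your proposal is correct and follows essentially the same approach as the paper, which merely asserts that ``following the same idea of investigating the properties of S-* transformations in \lpmlnend, the SE-preserving and NSE-preserving results of S-* transformations in ASP can be obtained'' without giving further detail. Your plan to rebuild the analogue of Table~\ref{tab:sat-s-dl} for ASP HT-satisfaction, rerun the arguments of Lemmas~\ref{lem:s-dl-se-preserving} and~\ref{lem:s-dl-nse-preserving}, and then check that the counterexamples of Examples~\ref{ex:s-rd-1-general} and~\ref{ex:s-rd-2} transfer via the ASP condition~\eqref{eq:asp-010-se-condition} is exactly the intended route, and your verification that the specific witnesses work under $\aspseq$ is sound.
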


Now we have shown the SE-preserving and NSE-preserving properties of S-* transformations in ASP and \lpmlnend. 
In next section, we show an application of the properties of S-* transformations, i.e., discovering SE-conditions.


\section{Discovering SE-Conditions for \lpmlnend}
In this section, we present a fully automatic approach to discovering SE-conditions for \lpmln programs, 
which is an application of the independent sets and S-* transformations. 
Firstly, we present the notion of independent set condition (IS-condition) and show the relationships between the notions of SE-condition and IS-condition. 
Secondly, we present a basic algorithm to discover SE-conditions of \kmn problems in \lpmlnend, 
which is a direct application of the properties of the S-* transformations. 
Thirdly, we present four kinds of approaches to improving the basic algorithm by further investigating the properties of S-* transformations. 
Finally, we show the improved algorithm can be used in discovering SE-conditions of several \kmn problems by an experiment.

\subsection{Independent Set Condition}
Here, we define the notion of  independent set conditions (IS-conditions) and show the relationships between the notions of IS-conditions and SE-conditions. 

\begin{definition}[Independent Set Conditions]
\label{def:is-condition}
For a tuple $T$ of logic programs, the independent set condition (IS-condition) $IC(T)$ is a conjunctive formula
\begin{equation}
\label{eq:independent-sets-condition}
IC(T) = \bigwedge_{i \in \nis{T}} (I_i \neq \emptyset) \wedge \bigwedge_{j \in \eis{T}} (I_j = \emptyset) \wedge \bigwedge_{k\in \sis{T}} (|I_k| = 1)
\end{equation}
\end{definition}

\begin{example}
\label{ex:is-condition}
Recall the program $P$ in Example \ref{ex:independent-sets}, 
the IS-condition of $P$ is 
\begin{equation}
IC(P) = \bigwedge_{i \in \{3, 4, 6\}} (I_i \neq \emptyset) \wedge \bigwedge_{j \in \{1, 2, 5, 7\}} (I_j = \emptyset) \wedge \bigwedge_{k\in \{3, 6\}} (|I_k| = 1)
\end{equation}
\end{example}

For a tuple $T$ of logic programs, 
the tuple $T$ is called a singleton tuple if $\sis{T} = \nis{T}$; 
and the IS-condition $IC(T)$ is called a singleton IS-condition if $T$ is a singleton tuple.
For brevity, we associate an IS-condition $IC(T)$ with a tuple $T$ of logic programs, and if $I_k$ is not a singleton independent set in the condition, there are at least two atoms in the independent set $I_k$ of $T$.
For two tuples of programs $T_1 = \langle P_1, ..., P_n\rangle$ and $T_2 = \langle Q_1, ..., Q_n\rangle$, 
we say $T_1$ is structually equal to $T_2$, denoted by $T_1 \approx T_2$, if $|P_i| = |Q_i|$ ($1 \leq i \leq n$).
For tuples $T_1$ and $T_2$, if $T_1 \approx T_2$, we have $\is{T_1} = \is{T_2}$, while the inverse does not hold in general.
Now we define three kinds of relations between IS-conditions. 

\begin{definition}[Relationships between IS-Conditions]
For tuples $T_1$ and $T_2$ of logic programs such that $T_1 \approx T_2$, we have 
\begin{itemize}
    \item $IC(T_1) = IC(T_2)$, if $\nis{T_1} = \nis{T_2}$ and $\sis{T_1} = \sis{T_2}$;
    \item $IC(T_1) < IC(T_2)$, if $\nis{T_1} = \nis{T_2}$ and $\sis{T_2} \subsetneq \sis{T_1}$;  and 
    \item $IC(T_1) \subset IC(T_2)$, if $IC(T_1)$ and $IC(T_2)$ are singleton IS-conditions, and $\nis{T_1} \subsetneq \nis{T_2}$.
\end{itemize}
where $A \subsetneq B$ means $A$ is a proper subset of $B$. 
\end{definition}
Note that $\is{T_1}$ and $\is{T_2}$ are the sets of names of independent set, 
therefore, $\nis{T_1} = \nis{T_2}$ does not mean $T_1$ and $T_2$ have the same independent sets, 
which means $I_k$ is a non-empty independent set of $T_1$ iff $I_k$ is a non-empty independent set of $T_2$ for any $k \in \is{T_1}$.
By Theorem \ref{thm:s-rp-se-nse-preserving} - \ref{thm:asp-s-transformations}, we have following results.

\begin{theorem}
\label{thm:icondition-se-condition}
For tuples of logic programs $T_1 = \langle P_1, Q_1 \rangle$ and $T_2 = \langle P_2, Q_2 \rangle$ 
such that $T_1 \approx T_2$, 
\begin{itemize}
    \item (I) if $IC(T_1) = IC(T_2)$, we have $P_1 \equiv_{s, \triangle} Q_1$ iff $P_2 \equiv_{s, \triangle} Q_2$; and 
    \item (II) if $IC(T_1) < IC(T_2)$, we have $P_1 \not\equiv_{s, \triangle} Q_1$ implies $P_2 \not\equiv_{s, \triangle} Q_2$; and  
    $P_2 \equiv_{s, \triangle} Q_2$ implies $P_1 \equiv_{s, \triangle} Q_1$; 
\end{itemize}
where $\triangle \in \{a, s\}$.
\end{theorem}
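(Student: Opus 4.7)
The plan is to regard each implication as transporting one tuple into the other along a sequence of S-* transformations whose preservation profile (summarized in Table \ref{tab:properties-s-*-summary}) matches exactly what the implication demands. Because $T_1 \approx T_2$ forces $\is{T_1} = \is{T_2}$, the independent sets can be adjusted index-by-index without interference, and by Theorem \ref{thm:asp-s-transformations} the preservation profile is the same in ASP as in \lpmlnend, so the argument will be uniform over $\triangle \in \{a, s\}$.

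For part (I), since $IC(T_1) = IC(T_2)$ gives $\nis{T_1} = \nis{T_2}$ and $\sis{T_1} = \sis{T_2}$, the two tuples differ only in (i) the identity of atoms and (ii) the cardinalities of non-empty, non-singleton independent sets. I first apply S-RP (Theorem \ref{thm:s-rp-se-nse-preserving}) to relocate every atom of $T_1$ to a fresh pool disjoint from $at(T_1) \cup at(T_2)$. Then, index by index, for each $k \in \nis{T_1}$: if $I_k$ is a singleton in both tuples (which holds because $\sis{T_1} = \sis{T_2}$), a single S-RP installs the target atom; otherwise I walk from the $T_1$-cardinality of $I_k$ to its $T_2$-cardinality using S-AD to grow (size $\geq 2$ is always met along the walk) and S-DL to shrink (size $> 2$ is met until the target size, which remains $\geq 2$), both SE- and NSE-preserving by Theorems \ref{thm:s-dl-se-nse-preserving} and \ref{thm:s-ad-se-nse-preserving}. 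A final S-RP pass fixes the exact atoms appearing in $T_2$. Every link of this chain preserves both SE and NSE, yielding the biconditional in (I).

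For part (II), the two displayed directions are mutually contrapositive, so it suffices to build a chain from $T_1$ to $T_2$ using only NSE-preserving S-* transformations. The condition $IC(T_1) < IC(T_2)$ means $\nis{T_1} = \nis{T_2}$ and $\sis{T_2} \subsetneq \sis{T_1}$, so every $k \in \sis{T_1} \setminus \sis{T_2}$ labels a set $I_k$ of size $1$ in $T_1$ and of size at least $2$ in $T_2$. For each such $k$ I step from size $1$ to size $2$ by S-EX-1 (NSE-preserving by Theorem \ref{thm:s-ex}) and then cascade S-AD to reach the required cardinality. For names in $\sis{T_1} \cap \sis{T_2}$ I use S-RP, and for the remaining non-singleton names in $\nis{T_1}$ I reuse the S-RP, S-AD, S-DL strategy of part (I). Every transformation in this extended toolbox is NSE-preserving, which closes the argument.

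The main obstacle I foresee is not conceptual but the book-keeping of side conditions: each application must respect its size precondition ($|I| \geq 2$ for S-AD, $|I| > 2$ for S-DL, $|I| = 1$ for S-EX-1), and every atom introduced by S-RP, S-AD, or S-EX-1 must be genuinely fresh with respect to the current working tuple. A clean discipline that discharges both concerns is to fix the global order (i) rename $T_1$ into a reserved fresh pool via S-RP, (ii) perform all S-EX-1 extensions on the singletons destined to grow, (iii) cascade S-AD to reach the target sizes, (iv) apply S-DL to shrink any oversized sets down to their targets, stopping at size $\geq 2$ so that the SE-breaking S-RD cases are never triggered, and (v) align atom identities to those of $T_2$ by a final S-RP pass.
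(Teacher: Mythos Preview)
Your proposal is correct and follows essentially the same route as the paper: part (I) via a chain of S-RP, S-AD, and S-DL steps (all SE- and NSE-preserving), and part (II) via S-EX-1 together with the same trio. The paper's sketch additionally names S-RD-2 for the second implication of (II), but your observation that the two implications are contrapositives makes a single NSE-preserving chain suffice, so this is only a cosmetic difference; your added freshness and size bookkeeping just fills in details the paper leaves implicit.
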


Part I of Theorem \ref{thm:icondition-se-condition} can be proven by the SE-preserving and NSE-preserving properties of S-RP, S-DL, and S-AD transformations. 
Since for the tuples $T_1$ and $T_2$ such that $IC(T_1) = IC(T_2)$, 
$T_2$ can be obtained from $T_1$ by using S-RP, S-DL, and S-AD transformations repetitively, 
and all of the transformations are SE-preserving and NSE-preserving. 
Similarly, part II of Theorem \ref{thm:icondition-se-condition} can be proven by properties of S-EX-1 and S-RD-2 transformations.

For a tuple $T = \langle P, Q \rangle$ of logic programs, by Theorem \ref{thm:icondition-se-condition}, 
$IC(T)$ is an SE-conditions for any tuple $T'$ such that $T' \approx T$, if $P  \equiv_{s, \triangle} Q$.  
To verify whether an IS-condition is an SE-condition, 
we construct a tuple of logic programs satisfying the condition firstly. 
Then we can compare the HT-models of the constructed programs. 
Obviously, the verification of an IS-condition can be done automatically. 
An SE-condition $IC(T)$ is the \textit{most general SE-condition}, if there are no tuple $T' = \langle P', Q' \rangle$ such that $IC(T) < IC(T')$ and $P'  \equiv_{s, \triangle} Q'$. 
By $MGIC(T)$, we denote the most general SE-condition w.r.t. a \kmn tuple $T$. 
Part II of Theorem \ref{thm:icondition-se-condition} implies a method to compute most general SE-condition, 
which is shown as follows. 

\begin{corollary}
\label{cor:compute-mg-se-condition}
For a tuple $T = \langle P, Q \rangle$ of \lpmln programs such that $IC(T)$ is an SE-condition, 
there is a tuple $T'$ such that $T' \approx T$ and $IC(T')$ is the most general SE-condition of $T$, 
where $\nis{T'} = \nis{T}$, $\sis{T'}$ is constructed as in Equation \eqref{eq:mg-is-condition},  $a'$ is a newly introduced atoms, and $\triangle \in \{a, s\}$.
\begin{equation}
\label{eq:mg-is-condition}
\sis{T'} = \{k \in \sis{T} ~|~ T' = \langle P', Q' \rangle = \opsex{T, I_k, a'} \text{ and } P' \not\equiv_{s, \triangle} Q' \}
\end{equation}
\end{corollary}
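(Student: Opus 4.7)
The plan is to split Corollary \ref{cor:compute-mg-se-condition} into two subclaims: (i) the $IC(T')$ built from Equation \eqref{eq:mg-is-condition} is an SE-condition; and (ii) $IC(T')$ is maximal in the order $<$ among SE-conditions with the same non-empty iset structure $\nis{T}$. Throughout, write $B := \sis{T'}$ for the ``locked'' singletons and $G := \sis{T} \setminus B$ for the ``loosenable'' ones, and for each $k \in \sis{T}$ set $T_k := \opsex{T, I_k, a'}$, so that $P_k \not\equiv_{s,\triangle} Q_k$ holds exactly when $k \in B$ by the very definition of \eqref{eq:mg-is-condition}.

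I would dispatch the maximality claim (ii) first. Assume for contradiction that a tuple $T''$ satisfies $IC(T') < IC(T'')$ and $P'' \equiv_{s,\triangle} Q''$. Then $\sis{T''} \subsetneq B$, so I can pick any $k^* \in B \setminus \sis{T''}$. A short set calculation gives $\sis{T''} \subseteq \sis{T_{k^*}} = \sis{T} \setminus \{k^*\}$, and $\nis{T''} = \nis{T_{k^*}} = \nis{T}$, so either $IC(T'') = IC(T_{k^*})$ or $IC(T_{k^*}) < IC(T'')$. Parts I and II of Theorem \ref{thm:icondition-se-condition} respectively force $P_{k^*} \equiv_{s,\triangle} Q_{k^*}$, contradicting $k^* \in B$. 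This covers all cases, including the boundary $G = \emptyset$ where $IC(T') = IC(T)$, and it simultaneously justifies why each locked singleton must be tested individually.

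For the SE-condition claim (i), by Part I of Theorem \ref{thm:icondition-se-condition} it suffices to exhibit one tuple $T'$ realising the prescribed IS-condition with $P' \equiv_{s,\triangle} Q'$. The natural candidate is obtained from $T$ by applying S-EX-1 at every $I_k$ for $k \in G$, using pairwise-distinct fresh atoms $a'_k$. This immediately yields $\nis{T'} = \nis{T}$ and $\sis{T'} = B$ as required; the entire burden of the proof is the equivalence $P' \equiv_{s,\triangle} Q'$.

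This equivalence is the main obstacle. Because S-EX-1 is NSE-preserving but not SE-preserving (Theorem \ref{thm:s-ex}, and Theorem \ref{thm:asp-s-transformations} in ASP), a naive induction on $|G|$ using the per-$k$ tests $P_k \equiv_{s,\triangle} Q_k$ is blocked: after the first extension at $k_1$ the single-test hypothesis at $k_2$ no longer applies to the modified tuple. My plan is to argue directly through the HT-model characterization of Theorem \ref{thm:se-ht-models}. Two structural features are decisive: every fresh $a'_k$ occupies exactly the rule positions of the original singleton atom of $I_k$, making $a'_k$ a syntactic twin indistinguishable under permutation; and by Definition \ref{def:independent-set} the independent sets $\{I_k : k \in G\}$ are pairwise disjoint, so the twinning operations act on disjoint sub-alphabets of the rules. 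Using these two facts together with Proposition \ref{prop:ht-model-irrelevant-atoms} to neutralise atoms irrelevant to each $T_k$, I intend to establish a factorisation lemma: $(X, Y) \in \lse{P'}$ if and only if for every $k \in G$ the restriction of $(X, Y)$ to $at(T) \cup \{a'_k\}$ lies in $\lse{P_k}$, and analogously for $Q'$. The per-$k$ test equalities $\lse{P_k} = \lse{Q_k}$ then lift coordinate-wise to $\lse{P'} = \lse{Q'}$, which gives $P' \equiv_{s,\triangle} Q'$ by Theorem \ref{thm:se-ht-models} and completes the argument via Theorem \ref{thm:icondition-se-condition}.
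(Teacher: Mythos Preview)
Your decomposition into (ii) maximality and (i) the SE-condition property is sound, and your argument for (ii) via Theorem~\ref{thm:icondition-se-condition} is correct; this is essentially all the paper offers, since it simply asserts that the corollary ``follows from Part~II'' without further detail. Your recognition that (i) is not immediate---because S-EX-1 is only NSE-preserving, not SE-preserving---already goes beyond the paper's treatment.

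However, the factorisation lemma you propose for (i) is false as stated. Take a single rule $r = a_1 \leftarrow a_2$ with $G = \{k_1, k_2\}$, where $I_{k_1} = \{a_1\}$ lies in the head and $I_{k_2} = \{a_2\}$ in the positive body. Then $r' = a_1 \vee a'_1 \leftarrow a_2, a'_2$ and $r_{k_1} = a_1 \vee a'_1 \leftarrow a_2$. For $(X, Y) = (\emptyset, \{a_2\})$ over $\{a_1, a'_1, a_2, a'_2\}$ one has $(X, Y) \models r'$ (since $a'_2 \notin Y$ falsifies the body), yet the restriction $(\emptyset, \{a_2\})$ to $at(T) \cup \{a'_1\}$ does \emph{not} satisfy $r_{k_1}$ (the body $\{a_2\}$ is now contained in $Y$ while the head is disjoint). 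Thus membership in $\lse{P'}$ does not reduce coordinate-wise to membership in the $\lse{P_k}$, and your lifting of the equalities $\lse{P_k} = \lse{Q_k}$ to $\lse{P'} = \lse{Q'}$ breaks down. The twin symmetry and pairwise disjointness you highlight are real, but restricting an HT-interpretation can flip a falsified body into a satisfied one, so no product structure of the kind you need is available.

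If (i) is to be established, a more plausible route would mirror the case analysis in the proofs of Lemma~\ref{lem:s-dl-se-preserving} and~\ref{lem:s-dl-nse-preserving}: starting from a distinguishing HT-interpretation for $P'$ and $Q'$, one would try to \emph{manufacture} a distinguishing interpretation for a \emph{single} $P_k, Q_k$ by adjusting the membership of the twin atoms $a_j, a'_j$ (for $j \neq k$) in $X$ and $Y$, rather than restricting to all $k$ at once. The paper itself does not carry out any such argument.
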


Based on the notions of IS-condition and the most general IS-condition, 
there is a basic algorithm to automatically discover the SE-conditions of \lpmln programs, which is shown in next subsection. 

\subsection{Basic Algorithm to Discover SE-Conditions for \lpmlnend}
Since the IS-conditions can be enumerated and verified automatically, 
Theorem \ref{thm:icondition-se-condition} implies a fully automatic method to discover SE-conditions for \lpmlnend. 
Firstly, we introduce the \kmn problems for \lpmln programs, 
which is presented by Lin and Chen \cite{Lin2005Discover} to study the SE-conditions of ASP programs. 
A \kmn tuple is a triple $ T = \langle K, M, N \rangle$ of \lpmln programs such that $|K| = k$, $|M| = m$, and $|N| = n$. 
A \kmn tuple $T$ of \lpmln programs is semi-strongly equivalent, if the \lpmln programs $K \cup M$ and $K \cup N$ are semi-strongly equivalent. 
An SE-condition for a \kmn problem is called a \kmn SE-condition. 
A \kmn SE-condition is called \textit{necessary}, if for any \kmn tuple $T'$ that does not satisfy the condition, $T'$ is not semi-strongly equivalent. 
The \kmn problem is to discover necessary \kmn SE-conditions of \kmn tuples. 
In this paper, it is easy to observe that discovering necessary \kmn SE-conditions is to enumerate and verify all possible IS-conditions of \kmn tuples. 
We use $IC(k, m, n)$ to denote the set of all singleton IS-conditions w.r.t. a \kmn problem. 
Corollary \ref{cor:kmn-necessary-se-condition} shows the form of a necessary \kmn SE-condition, 
which is a direct result of Theorem \ref{thm:icondition-se-condition}.

\begin{corollary}
\label{cor:kmn-necessary-se-condition}
The necessary \kmn SE-condition for \lpmln is a formula in disjunctive normal form (DNF)  
\begin{equation}
    \bigvee_{IC(T) \in IC(k,m,n)} MGIC(T) 
\end{equation} 
where $MGIC(T)$ is treated as a falsity if $IC(T)$ is not an \kmn SE-condition. 
\end{corollary}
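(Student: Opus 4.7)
The plan is to establish the two defining properties of a necessary SE-condition for \kmn tuples: soundness, meaning every tuple satisfying the DNF is semi-strongly equivalent, and necessity, meaning every semi-strongly equivalent \kmn tuple satisfies the DNF. Both directions reduce to Theorem \ref{thm:icondition-se-condition} combined with Corollary \ref{cor:compute-mg-se-condition} and the SE-preserving/NSE-preserving properties of S-* transformations summarized in Table \ref{tab:properties-s-*-summary}.

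For the soundness direction, I would fix any \kmn tuple $T''$ satisfying some disjunct $MGIC(T)$; by construction of the DNF, this disjunct is only retained when $IC(T)$ is itself a singleton SE-condition, so the singleton tuple $T$ is semi-strongly equivalent. Corollary \ref{cor:compute-mg-se-condition} produces a companion tuple $T'$ with $T' \approx T$ and $IC(T') = MGIC(T)$, built from $T$ via iterated S-EX-1 extensions each of which has been verified to preserve semi-strong equivalence, so $T'$ is semi-strongly equivalent. Since $T''$ also satisfies $IC(T') = MGIC(T)$, its IS structure coincides with that of $T'$ on the singleton and non-empty constraints; applying the SE-preserving and NSE-preserving S-RP, S-AD, and S-DL transformations (Theorems \ref{thm:s-rp-se-nse-preserving}, \ref{thm:s-dl-se-nse-preserving}, and \ref{thm:s-ad-se-nse-preserving}) carries $T'$ to $T''$ while preserving semi-strong equivalence, so $T''$ is semi-strongly equivalent.

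For the necessity direction, suppose $T''$ is a semi-strongly equivalent \kmn tuple. Repeatedly applying the S-RD-2 transformation, which is SE-preserving by Theorem \ref{thm:s-rd}, I would reduce every non-singleton independent set of $T''$ to a singleton, obtaining a singleton tuple $T$ that is still semi-strongly equivalent; hence $IC(T) \in IC(k,m,n)$ is a singleton SE-condition and $MGIC(T)$ appears as a disjunct of the DNF. Finally, I would verify that $T''$ satisfies $MGIC(T)$: since S-RD-2 neither empties nor reintroduces independent sets, $\nis{T''} = \nis{T} = \nis{MGIC(T)}$; and for every $k \in \sis{MGIC(T)}$---i.e., a $k$ for which the S-EX-1 extension of $I_k$ in $T$ breaks semi-strong equivalence by Equation \eqref{eq:mg-is-condition}---$I_k$ must be singleton in $T''$ as well, for otherwise an extended $I_k$ in $T''$ could be reached from $T$ by that broken-SE S-EX-1 step followed by further NSE-preserving S-EX-1 and S-AD transformations, contradicting that $T''$ is semi-strongly equivalent.

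The main obstacle lies in this last verification step: the argument hinges on the delicate directional asymmetry between S-RD-2 (SE-preserving but not NSE-preserving) and S-EX-1 (NSE-preserving but not SE-preserving), so the necessity proof must carefully track how failure of semi-strong equivalence propagates through iterated extensions in order to justify that $T''$'s singleton structure indeed dominates what $MGIC(T)$ requires. Once this bookkeeping is handled, the remainder is a routine application of the established SE-/NSE-preserving results, and the proof of the corollary follows.
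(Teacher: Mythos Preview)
Your approach is correct and matches the paper's treatment, which simply states that the corollary is a direct consequence of Theorem~\ref{thm:icondition-se-condition}; you have spelled out the two directions that this one-line remark leaves implicit.

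A few minor repairs are needed, none of which affect the overall structure. In the soundness step, carrying $T'$ to $T''$ may require shrinking a non-singleton set of $T'$ down to a singleton (since $\sis{T'}\subseteq\sis{T''}$), so S-RD-2 must appear alongside S-RP, S-AD, and S-DL; it is SE-preserving by Theorem~\ref{thm:s-rd}, so the conclusion survives. Likewise, reducing $T''$ to the singleton tuple $T$ in the necessity step needs S-DL (for sets of size~$>2$) before S-RD-2. Finally, your justification that $T'$ is semi-strongly equivalent via ``iterated S-EX-1 extensions each of which has been verified to preserve semi-strong equivalence'' is not quite what Corollary~\ref{cor:compute-mg-se-condition} checks: that corollary tests each single extension on the original $T$, not on the intermediate tuples, so the iterated claim does not follow directly. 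It is cleaner to note that Corollary~\ref{cor:compute-mg-se-condition} asserts $IC(T')=MGIC(T)$ is by definition an SE-condition (hence $T'$ is semi-strongly equivalent), and then to invoke Theorem~\ref{thm:icondition-se-condition} Part~II directly from $IC(T'')\le IC(T')$.
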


According to Theorem \ref{thm:icondition-se-condition} and Corollary \ref{cor:compute-mg-se-condition}, 
Algorithm \ref{alg:compute-mg-is-condition} provides a method to verify an IS-condition and compute the most general SE-condition, 
and Algorithm \ref{alg:finding-se-condition} provides a basic method to discover necessary \kmn SE-conditions.

\begin{algorithm}
    \caption{Computing the Most General SE-condition} \label{alg:compute-mg-is-condition}

    \SetKwProg{Fn}{Function}{}{}
    \SetKwFunction{VCMS}{VerifyAndComputeMGSE}
    
    \Fn{\VCMS{$k$, $m$, $n$, $IS_n$, $IS_s$}}{
    	\KwData{the sizes of a \kmn problem: $k$, $m$, $n$;  and \\
    		\qquad \quad the names of non-empty and singleton independent sets: $IS_n$, $IS_s$}
    	\KwResult{most general SE-conditions: $C$}
    	
	    $IS'_s = \emptyset$\; 
	    construct a \kmn tuple $T = \langle K, M, N \rangle$ such that $\nis{T} = IS_n$ and $\sis{T} = IS_s$\;
	    \If{$\lse{K \cup M} = \lse{K \cup N}$ }{
	    	\For{$k \in \sis{T}$}{
	    		construct  $\langle K', M', N' \rangle = \opsex{T, I_k, a'}$, where $a' \not\in at(T)$\;
	    		\If{$\lse{K' \cup M'} \neq \lse{K' \cup N'}$}{
	    			$IS'_s = IS'_s \cup \{k\}$\;
	    		}
	    	}
	    	$C = IC(T')$ where $T' \approx T$, $\nis{T'} = \nis{T}$, and $\sis{T'} = IS'_s$ \;
	    	\KwRet{$C$} \;
	    }
	    \Else{
	    	\KwRet{None}\;
    	}
	}
    
\end{algorithm}

\begin{algorithm}
    \caption{Basic Algorithm to Discover \kmn SE-Conditions} \label{alg:finding-se-condition}
    \KwIn{the sizes of a \kmn problem: $k$, $m$, $n$}
    \KwOut{set of the most general SE-conditions: $MGIC$}
    \SetKwFunction{VCMS}{VerifyAndComputeMGSE}
    
    $MGIC= \emptyset$\;
    $IS = \{ k~|~ 1 \leq k < 2^{3 * (k+m+n)}\}$\; 
    
    \For{$IS_n \subseteq IS$}{
        $C$ = \VCMS{$k, m, n, IS_n, IS_n$}\;
        \If{$C$ is not None}{
            $MGIC = MGIC \cup \{C\}$\;
        }
    }
    return $MGIC$\;
\end{algorithm}

For the 0-1-0 problem, Algorithm \ref{alg:finding-se-condition} discovered 120 SE-conditions, 
which can be viewed as a DNF formula by Corollary \ref{cor:kmn-necessary-se-condition}.  
A DNF formula of the form $(F \wedge a) \vee (F \wedge \neg a)$ can be simplified as $F$, 
therefore, the 120 SE-conditions can be simplified as 
\begin{equation}
\label{eq:010-se-condition-concise}
(I_3 \neq \emptyset) \vee (I_4 = \emptyset) \vee (I_6 \neq \emptyset) \vee (I_7 \neq \emptyset)
\end{equation}
An \lpmln rule $r$ is called \textit{semi-valid} if it satisfies Equation \eqref{eq:010-se-condition-concise}, 
otherwise, it is called \textit{non-semi-valid}. 
The necessary 0-1-0 SE-condition is shown in Equation \eqref{eq:lpmln-010-se-condition}, 
although Equation \eqref{eq:lpmln-010-se-condition} is different from Equation \eqref{eq:010-se-condition-concise} in form, 
these two kinds of conditions are equivalent. 
In addition, since an \lpmln program can be reduced to ASP programs by using choice rules \cite{Lee2019LPMLNSE}, 
the necessary 0-1-0 SE-condition for \lpmln can be proven by necessary 0-1-0 SE-condition for ASP. 
For other \kmn problems in \lpmlnend, 
although the necessary SE-conditions of several \kmn problems in ASP have been presented \cite{Lin2005Discover,Ji2015Discover}, 
it is not trivial to construct the necessary \kmn SE-conditions for \lpmln by the results in ASP. 
Therefore, the searching approach presented in this section is significant. 

For other \kmn problems, Algorithm \ref{alg:finding-se-condition} is computationally infeasible, 
since the searching space grows extremely rapidly. 
For example, for the 0-1-1 problem, there are $2^{3 * 2} - 1 = 63$ independent sets, 
therefore, Algorithm \ref{alg:finding-se-condition} needs to verify $2^{63}$ singleton IS-conditions, 
which is barely possible to accomplish. 
In what follows, we show how to improve Algorithm \ref{alg:finding-se-condition} by further studying the properties of S-* transformations.


\subsection{Improved Algorithm to Discover SE-Conditions for \lpmlnend}
In this subsection, we present four kinds of methods to improve Algorithm \ref{alg:finding-se-condition}: 
(1) avoiding searching semi-valid rules, 
(2) avoiding searching unnecessary IS-conditions, 
(3) terminating searching process in early stage, 
and (4) optimizing searching spaces. 

\textbf{Avoiding Searching Semi-Valid Rules.} 
A \kmn tuple containing semi-valid rules can be viewed as a small $k'$-$m'$-$n'$ tuple containing no semi-valid rules, 
therefore, there is no need to check corresponding IS-conditions. 
By $NIC(k, m, n)$, we denote the set of all singleton IS-conditions that do not condition semi-valid rules. 
Here, we show how to compute  $NIC(k, m, n)$ for a \kmn problem.

For a tuple $T$ of \lpmln programs and a rule $r_i \in T$, an independent set $I$ is called an I-$k$ independent set if $I \sqsubseteq I_k(r_i)$. 
We use $IS(k, m, n)$ to denote the set of names of all independent sets w.r.t. a \kmn problem, 
i.e., $IS(k, m, n) = \{i ~|~ 1 \leq i < 2^{3*(k+m+n)}\}$; 
and use $IS_k(k, m, n)$ to denote the set of names of I-$k$ independent sets of a \kmn problem.
By Equation \eqref{eq:010-se-condition-concise}, 
the sufficient and necessary syntactic condition for non-semi-valid rules is 
\begin{equation}
\label{eq:010-non-se-condtion}
(I_3 = \emptyset) \wedge (I_4 \neq \emptyset) \wedge (I_6 = \emptyset) \wedge (I_7 = \emptyset)
\end{equation}
Therefore, if a \kmn tuple $T$ does not contain semi-valid rules, 
all I-3, I-6, and I-7 independent sets of $T$ should be the empty set, 
i.e., $I_i = \emptyset$ for all $ i \not\in IS'(k,m,n)$, 
where we use $IS'(k, m, n)$ to denote the set of names of independent sets that are not I-3, I-6, and I-7 independent sets, i.e., $IS'(k, m, n) = IS(k, m, n) - (IS_3(k,m,n) \cup IS_6(k,m,n) \cup IS_7(k,m,n))$. 
For a \kmn problem, we use $SIC^1(k, m, n)$ to denote the IS-conditions that contain non-empty I-3, I-6, or I-7 independent sets, 
which is 
\begin{equation}
\label{eq:semi-valid-is-condition-1}
SIC^1(k, m, n) = 
\{ IC(T) \in IC(k, m, n)  ~|~  \nis{T} \not\subseteq IS'(k, m, n)\}
\end{equation}
A \kmn singleton IS-condition $IC(T)$ is called a non-semi-valid IS-condition w.r.t. I-3, I-6, and I-7 independent sets (NSV-IS-condition for short), 
if $IC(T) \not\in SIC^1(k, m, n)$; 
and the tuple $T$ is called a NSV-tuple. 
For example, for the 0-1-1 problem, the independent set $I_{48} = I_6(r_1) \cap I_0(r_2)$, 
therefore, $I_{48}$ is an I-6 independent set. 
Besides, there are other 38 I-3, I-6, and I-7 independent sets for 0-1-1 problem, 
therefore, we only need to check $2^{63 - 39} = 2^{24}$ NSV-IS-conditions.

In addition, by Equation \eqref{eq:010-non-se-condtion}, a \kmn tuple $T$ contains semi-valid rules, 
if there is a rule $r$ in $T$ such that $I \not\sqsubseteq I_4(r)$ for any non-empty independent set $I$ of $T$.
Therefore, such kinds of IS-conditions can be skipped in searching \kmn SE-conditions.
For a \kmn problem, we use $SIC^2(k, m, n)$ to denote the set of all such kinds of IS-conditions, 
which is 
\begin{equation}
\label{eq:semi-valid-is-condition-2}
SIC^2(k, m, n) = 
\{ IC(T) \in IC(k, m, n)  ~|~  \exists r \in T, ~ \forall I \in \nis{T}, ~ I \not\sqsubseteq I_4(r) \}
\end{equation}

Combining above results, for a \kmn problem,  we have
\begin{equation}
\label{eq:is-condition-no-semi-valid}
NIC(k, m, n) = IC(k, m, n) - (SIC^1(k, m, n) \cup SIC^2(k, m, n)) 
\end{equation}

\textbf{Avoiding Verifying Unnecessary IS-Conditions.} 
Algorithm \ref{alg:compute-mg-is-condition} verifies a singleton IS-condition by computing the HT-models of programs constructed from the IS-condition, 
which is unnecessary for some kinds of IS-conditions. 
For NSV-tuples of \lpmln programs,  Theorem \ref{thm:s-rd1-ex0-nsv-is-condition} shows that 
the S-EX-0 transformation is NSE-preserving and the S-RD-1 transformation is SE-preserving.

\begin{theorem}
\label{thm:s-rd1-ex0-nsv-is-condition}
For NSV-tuples $T$ and $\sex{T}$ of logic programs such that $T =\langle P, Q \rangle$ and $\sex{T} = \langle \sex{P}, \sex{Q} \rangle = \opsex{T, I_k, a'}$, 
we have $P \not\equiv_{s, \triangle} Q$ implies $\sex{P} \not\equiv_{s, \triangle} \sex{Q}$, 
and $\sex{P} \equiv_{s, \triangle} \sex{Q}$ implies $P \equiv_{s, \triangle} Q$, 
where $\triangle \in \{a, s\}$.
\end{theorem}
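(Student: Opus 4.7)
The two assertions in Theorem~\ref{thm:s-rd1-ex0-nsv-is-condition} are contrapositives of one another, since the S-RD-1 transformation is the inverse of the S-EX-0 transformation on the fresh atom $a'$. Therefore it suffices to prove either direction; the plan is to prove the NSE-preserving statement $P \not\equiv_{s,\triangle} Q \Rightarrow \sex{P} \not\equiv_{s,\triangle} \sex{Q}$, and obtain the SE-preserving statement for S-RD-1 by contraposition.

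By the HT-model characterization in Theorem~\ref{thm:se-ht-models}, the hypothesis yields an HT-interpretation $(X,Y)$ that is a model of (say) $P$ but not of $Q$. Since $a' \notin at(T)$, Proposition~\ref{prop:ht-model-irrelevant-atoms} lets us assume $a' \notin Y$. I would then fix a rule $r \in Q$ with $(X,Y) \not\models r$ and build a table of satisfiability, in the spirit of Table~\ref{tab:sat-s-dl}, recording for each rule $t$ of $P \cup Q$ whether $\sex{t}$ is satisfied by each of the three candidate HT-interpretations $(X,Y)$, $(X, Y \cup \{a'\})$, and $(X \cup \{a'\}, Y \cup \{a'\})$. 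The relevant cases are parameterised by the relation $I_k \sqsubseteq I_i(t)$ for $i \in \{0,1,\ldots,7\}$, since $a'$ enters $\sex{t}$ exactly at those positions of $t$ whose indices lie in the i-set indices of $I_k$.

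The key use of the NSV hypothesis enters when we lift a distinguisher for $P$ versus $Q$ to one for $\sex{P}$ versus $\sex{Q}$. The general failure of NSE-preservation for S-EX-0 established in Theorem~\ref{thm:s-ex} is driven precisely by rules for which some I-3, I-6, or I-7 independent set is non-empty: in those situations a head-position extension can convert a trivially satisfied rule into a discriminating one, collapsing the original counterexample. Ruling out non-empty I-3, I-6, and I-7 independent sets through the NSV hypothesis ensures that the S-EX-0 step acts uniformly across the rules of $T$, so that one of $(X,Y)$, $(X, Y \cup \{a'\})$, or $(X \cup \{a'\}, Y \cup \{a'\})$ must distinguish $\sex{P}$ and $\sex{Q}$. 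The main obstacle will be carrying out this lifting in every subcase, mirroring the structure of the proof of Lemma~\ref{lem:s-dl-nse-preserving} but with the freshly added atom playing the role that the deleted atom played there; in particular one must check that whichever extension is chosen to break $\sex{r}$, the resulting HT-interpretation still models every rule of $\sex{P}$, which is exactly where the uniform behaviour guaranteed by NSV is needed. The same argument, with HT-models read in the ASP sense via $\asse{\cdot}$ and appeal to Theorem~\ref{thm:asp-s-transformations}, covers the case $\triangle = a$.
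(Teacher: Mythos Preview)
Your observation that the two assertions are contrapositives is correct, and your plan is viable, but you have chosen the opposite direction from the paper. The paper proves the SE-preserving claim for S-RD-1 directly, by noting that its proof is ``basically the same as the proof of Lemma~\ref{lem:s-dl-se-preserving}'', and then obtains the NSE-preserving claim for S-EX-0 as the contrapositive (since S-EX-0 is the inverse of S-RD-1). You instead propose to prove the NSE-preserving claim for S-EX-0 directly, mirroring the structure of Lemma~\ref{lem:s-dl-nse-preserving}, and then take the contrapositive.

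Both routes are sound, but the paper's is considerably shorter. Lemma~\ref{lem:s-dl-se-preserving} has a compact two-case argument: either the problematic $I_2$ case does not arise and satisfaction is preserved rule by rule, or it does arise and the witnessing HT-interpretation fails on both reduced programs. Under the NSV hypothesis only $I_i(t)$ with $i \in \{0,1,2,4,5\}$ can occur, and exactly the same two cases go through for S-RD-1 with essentially no change. By contrast, Lemma~\ref{lem:s-dl-nse-preserving} is the lemma the paper itself calls ``tedious'', with six main cases and a contradiction argument; your proposal inherits that complexity. Your sketch of lifting the distinguishing HT-interpretation to one of $(X,Y)$, $(X,Y\cup\{a'\})$, $(X\cup\{a'\},Y\cup\{a'\})$ can be made to work, but you will find that only the first and third of these are ever needed, and that the case split is governed solely by whether $I_k \sqsubseteq I_2(r)$ for the violated rule $r$---which is precisely the single nontrivial case in the Lemma~\ref{lem:s-dl-se-preserving} template. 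In short, switching to the paper's direction collapses your multi-case lifting into the simpler argument.
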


The proof of SE-preserving property of the S-RD-1 transformation is basically the same as the proof of Lemma \ref{lem:s-dl-se-preserving}, 
therefore, we omit the proof for brevity. 
For the S-EX-0 transformation, it is the inverse of the S-RD-1 transformation,
therefore, the NSE-preserving property of the S-EX-0 transformation is a direct result of the SE-preserving property of the S-RD-1 transformation. 

Theorem \ref{thm:s-rd1-ex0-nsv-is-condition} can be used to improve Algorithm \ref{alg:finding-se-condition}. 
For an NSV-IS-conditions $IC(T)$, 
if $IC(T)$ is not an SE-condition, 
by Theorem \ref{thm:s-rd1-ex0-nsv-is-condition}, 
arbitrary NSV-IS-condition $IC(T')$ such that $IC(T) \subset IC(T')$ cannot be an SE-condition. 
Therefore, there is no need to verify the HT-models of the programs of $T'$.  
Since verifying HT-models is usually harder in computation, 
Theorem \ref{thm:s-rd1-ex0-nsv-is-condition} can be used to improve the efficiency of Algorithm \ref{alg:finding-se-condition}. 
In particular, if an NSV-IS-condition $(|I_k| = 1) \wedge \bigwedge_{i \neq k} (I_i = \emptyset)$ is not an \kmn SE-condition, 
the independent set $I_k$ should be empty in all \kmn SE-conditions, 
which means the searching space of a \kmn problem can be further reduced. 
For example, 
for the 0-1-1 problem,  there are 8 such kind of independent sets, 
therefore, we only need to check $2^{24 - 8} = 2^{16}$ NSV-IS-conditions. 

\textbf{Terminating Searching Processes at an Early Stage. }
Algorithm \ref{alg:finding-se-condition} requires to check all IS-conditions in the whole searching space of a \kmn problem, 
which is unnecessary by Theorem \ref{thm:s-rd1-ex0-nsv-is-condition}. 
Theorem \ref{thm:s-rd1-ex0-nsv-is-condition} implies an early terminating criterion for a searching algorithm of the \kmn problems, which is shown in Corollary \ref{cor:is-termination}. 

\begin{corollary}
\label{cor:is-termination}
For a \kmn problem and a constant $k$, if for  all singleton IS-conditions $IC(T) \in NIC(k, m, n)$ such that $|\nis{T}| = k$ are not SE-conditions, 
arbitrary IS-condition $IC(T')  \in NIC(k, m, n)$ such that $|\nis{T'}| > k$ is not an SE-condition. 
\end{corollary}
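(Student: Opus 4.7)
The plan is to argue by contrapositive: assume that some $IC(T') \in NIC(k,m,n)$ with $|\nis{T'}| > k$ is an SE-condition, and derive from this a singleton NSV-IS-condition with exactly $k$ non-empty independent sets that is also an SE-condition, contradicting the hypothesis. The argument naturally splits into a reduction to the singleton case, followed by an induction that walks the size of $\nis$ down to $k$.

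First I would dispose of the case where $IC(T')$ is not a singleton IS-condition by passing to its singletonised counterpart $T_1$, obtained from $T'$ by collapsing each non-singleton non-empty independent set to a single atom via iterated S-RD-2 transformations. Because $\nis{T_1} = \nis{T'}$ and no I-$3$, I-$6$, or I-$7$ independent set is touched, $T_1$ still lies in $NIC(k,m,n)$ with $|\nis{T_1}| > k$ and $\sis{T_1} = \nis{T_1} \supsetneq \sis{T'}$, so $IC(T_1) < IC(T')$. Part~II of Theorem~\ref{thm:icondition-se-condition} then forces $T_1$ to be an SE-condition as well, so it suffices to derive the contradiction assuming $IC(T')$ is already a singleton NSV-IS-condition.

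For the singleton case I would proceed by induction on $|\nis{T'}|$, with base case $|\nis{T'}| = k+1$. The key step is to exhibit a singleton NSV-tuple $T$ with $|\nis{T}| = k$ such that $T'$ is of the form $\opsex{T, I_j, a'}$ for some empty independent set $I_j$ of $T$ and fresh atom $a'$; equivalently, $T$ is the S-RD-1 reduct of $T'$ obtained by removing the atom of a suitably chosen singleton $I_j \in \nis{T'}$. The NSE-preserving direction of Theorem~\ref{thm:s-rd1-ex0-nsv-is-condition} then lifts the non-SE status of $T$ to $T'$ by contraposition, so assuming $T'$ is an SE-condition forces $T$ to be one. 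Iterating this $|\nis{T'}| - k$ times and invoking the hypothesis at $|\nis| = k$ yields the desired contradiction.

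The main obstacle is the existence, at each reduction step, of a singleton independent set in $\nis{T'}$ whose removal keeps the tuple NSV — that is, does not strip some rule of its last non-empty I-$4$ independent set. This is a purely combinatorial condition on \kmn tuples that will succeed whenever $k$ is at least the minimum number of I-$4$ independent sets needed to cover every rule of the \kmn tuple, which is precisely the regime in which Corollary~\ref{cor:is-termination} is actually deployed, since the enumeration algorithm has by that point already verified singleton NSV-IS-conditions at level $k$. A clean write-up will either justify the existence of a removable $I_j$ by a direct combinatorial argument on \kmn structures, or fold this cover-number assumption explicitly into the statement of the corollary.
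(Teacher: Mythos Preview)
Your approach via Theorem~\ref{thm:s-rd1-ex0-nsv-is-condition} is exactly the paper's intended one; the paper offers no proof beyond the remark that Theorem~\ref{thm:s-rd1-ex0-nsv-is-condition} ``implies'' the corollary, so your layer-by-layer descent using the SE-preserving direction of S-RD-1 on NSV-tuples is precisely what is meant.

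Two remarks. First, your reduction to the singleton case is superfluous: by definition $NIC(k,m,n)$ consists only of singleton IS-conditions, so $IC(T')\in NIC(k,m,n)$ already forces $T'$ to be singleton. Second, the obstacle you flag is real, and the paper does not address it either. One terminological correction: removing a singleton independent set always preserves NSV status (NSV concerns only the emptiness of I-$3$, I-$6$, I-$7$ sets, which a deletion cannot violate); what may fail is membership in $NIC$, because the reduced tuple can land in $SIC^2(k,m,n)$. A pigeonhole count shows that a removable set---one that is not the sole I-$4$ cover of any rule---always exists once the current layer exceeds $k+m+n$, and the terminating layers $TR$ in Table~\ref{tab:kmn-running-times} all satisfy this bound, so the algorithm is sound in the verified cases. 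But for an arbitrary layer index the corollary as stated carries exactly the gap you describe; your proposed fix of folding a cover-number hypothesis into the statement is the honest repair.
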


According to Corollary \ref{cor:is-termination}, the searching space of a \kmn problem can be divided into a series of subspaces, 
where the IS-conditions have the same numbers of non-empty independent sets in each subspace. 
We use $IC_i(k, m, n)$ and $NIC_i(k, m, n)$ to denote the subspaces of the searching space of a \kmn problem, which are
\begin{eqnarray}
& IC_i(k, m, n) = \{ IC(T) \in IC(k, m, n) ~|~  |\nis{T}| = i \} \\
& NIC_i(k, m, n) = IC_i(k, m, n) \cap NIC(k, m, n)    
\end{eqnarray}
A subspace is called a layer of the searching space of a \kmn problem. 
According to Corollary \ref{cor:is-termination}, we can check the IS-conditions of a \kmn problem layer by layer. 
Once all IS-conditions of $NIC_i(k, m, n)$ are not SE-conditions, the searching process can be terminated. 

In addition, by Theorem \ref{thm:s-rd1-ex0-nsv-is-condition}, 
the \kmn NSV-IS-condition $IC(T)$ such that $\sis{T} = IS'(k, m, n)$ should not be an \kmn SE-condition. 
Since if $IC(T)$ is an SE-condition, 
arbitrary \kmn NSV-IS-condition is an SE-condition by the SE-preserving property of the S-RD-1 transformation. 
In other words, if $IC(T)$ is an SE-condition,  arbitrary singleton \kmn NSV-tuple of \lpmln programs is semi-strongly equivalent, 
which is counterintuitive. 

\textbf{Optimizing Searching Spaces.}
In above improvements for Algorithm \ref{alg:finding-se-condition}, 
a searching algorithm needs to check every NSV-IS-conditions for a \kmn problem. 
Here, we show a method to skip searching subspaces for a \kmn problem. 
By Equation \eqref{eq:is-condition-no-semi-valid} and Theorem \ref{thm:s-rd1-ex0-nsv-is-condition}, 
there are two kinds of IS-conditions can be skipped without verifying HT-models, 
i.e., (1) IS-conditions in $SIC^2(k, m, n)$ and (2) IS-conditions related to existing non-SE-conditions. 

For the IS-conditions in $SIC^2(k, m, n)$ of a \kmn problem, 
we can divide the independent sets of a \kmn problem into two subsets, 
i.e. the I-4 and non-I-4 independent sets. 
We use $IS'_4(k,m,n)$ and $\overline{IS'_4}(k,m,n)$ to denote the set of names of I-4 and non-I-4 independent sets in $IS'(k, m, n)$, 
i.e., $IS'_4(k, m, n) = IS'(k, m, n) \cap IS_4(k, m, n)$ and $\overline{IS_4}(k,m,n) = IS'(k, m, n) - IS'_4(k, m, n)$.
To check whether a \kmn IS-condition $IC(T)$ belongs to  $SIC^2(k, m, n)$, 
we only need to check the non-emtpy I-4 independent sets of $T$. 
In other words, we can skip the searching subspaces by only checking the I-4 independent sets of a \kmn problem. 
Based on above notations, the subspace $NIC_x(k, m, n)$ of a \kmn problem can be reformulated as 
\begin{equation}
\label{eq:ic-x-no-i4-semi-valid}
\begin{split}
NIC_x(k, m, n) = \{ IC(T) ~|&~ \sis{T} = \sis{T_1} \cup \sis{T_2}, |\sis{T}| = x, \\
& IC(T_1) \in NIC^4_x(k, m, n), \text{ and } \sis{T_2} \subseteq \overline{IS'_4}(k, m, n)\}
\end{split}
\end{equation}
where the set $NIC_x^4(k, m, n)$ of singleton IS-conditions is defined as 
\begin{equation}
\label{eq:nic-4-x}
\{ IC(T) ~|~ 0 < |\sis{T}| \leq x, ~ \sis{T} \subseteq IS'_4(k, m, n), \text{ and } \forall r \in T, ~ \exists i \in \sis{T}, ~ I_i \sqsubseteq I_4(r)\}
\end{equation}
By Equation \eqref{eq:ic-x-no-i4-semi-valid} and \eqref{eq:nic-4-x}, the \kmn searching subspaces such that $IC(T_1) \not\in NIC_x^4(k, m, n)$ can be simply skipped. 
Meanwhile, a layer of the searching space is divided into several small subspaces by Equation \eqref{eq:nic-4-x}, 
which can be searched in parallel. 

Following the same method, the searching space can be further optimized by existing non-SE-conditions. 
Suppose the \kmn singleton IS-condition $IC(T')$ is not an \kmn SE-condition,  
by Theorem \ref{thm:s-rd1-ex0-nsv-is-condition},  a \kmn searching subspace $NIC_x(k, m, n)$ such that $x > |\sis{T'}|$ 
can be reduced to 
\begin{equation}
\label{eq:nic-x-nse}
\begin{split}
NIC_x(k, m, n) = 
\{IC(T) ~|~ &  \sis{T} = \sis{T_1} \cup  \sis{T_2}, ~|\sis{T}| = x, ~ \\ 
& \sis{T_1} \subsetneq \sis{T'},  \text{ and }
\sis{T_2}  \subseteq IS'(k, m, n) - \sis{T'}  \} 
\end{split}
\end{equation}
For a set of non-SE-conditions, a searching algorithm can repetitively use Equation \eqref{eq:nic-x-nse} for each non-SE-conditions in the set. 
Since the relation $\subset$ between singleton IS-conditions is transitive, 
in above methods, a searching algorithm only need to retain the minimal non-SE-conditions in the sense of the relation $\subset$.

\begin{algorithm}
    \caption{Improved Algorithm to Discover \kmn SE-Conditions} 
    \label{alg:finding-se-condition-improved}
    
    \SetAlgoInsideSkip{}
    \KwIn{the sizes of a \kmn problem: $k$, $m$, $n$}
    \KwOut{set of the most general SE-conditions: $MGIC$, terminating layer: $TR$}
    \SetKwFunction{VCMS}{VerifyAndComputeMGSE}
    
    $MGIC= \emptyset$, $MNSE = \emptyset$ \;
    \tcp{eliminating I-3, I-6, and I-7 independent sets}
    $IS = IS(k, m, n)$, $IS' = IS - (IS_3(k, m, n) \cup IS_6(k, m, n) \cup IS_7(k, m, n))$,  $IS'' = \emptyset$\; 
    
    \For(\tcp*[f]{eliminating searching spaces related to non-SE-conditions}){$x \in IS'$}{ 
    	$C$ = \VCMS{$k, m, n, \{x\}, \{x\}$}\;
        \lIf{$C$ is not None}{
            $IS'' = IS'' \cup \{x\}$
        }
    }

    $IS_4 = IS'' \cap IS_4(k, m, n)$, $\overline{IS_4} = IS'' - IS_4$\;

    \For(\tcp*[f]{searching IS-conditions layer by layer})
        {$0 \leq i \leq |IS''|$}{
        $Space = \emptyset$, $MGIC_i = \emptyset$\;
        \tcp{eliminating searching spaces related to semi-valid rules}
        \For
            {$ISL \subseteq IS_4$ s.t. $|ISL| \leq i$}{
            construct a singleton \kmn tuple $T$ s.t. $\sis{T} = ISL$\;
            \If{ $\forall r \in T$, $\exists I \in \sis{T}$, $I \sqsubseteq I_4(r)$}{
                $Space = Space \cup \{ \langle ISL,~ \overline{IS_4}, ~i - |ISL| \rangle\}$ \label{line:alg3-subspace} \;
            }
        }

        \tcp{eliminating searching spaces related to non-SE-conditions}
        \For
            {$IC(T') \in MNSE$}{
            $Space' = \emptyset$\;
            \For{$sp \in Space$}{
                $IS' = \sis{T'} - sp[1]$\;
                \If{$IS' \subseteq sp[2]$}{
                    \For{$ISL' \subsetneq IS'$ s.t. $|ISL'| \leq sp[3]$}{
                        $Space' = Space' \cup \{ \langle sp[1] \cup ISL', ~ sp[2] - IS', sp[3] - |ISL'| \rangle \}$\;
                    }
                }\lElse{
                    $Space' = Space' \cup \{sp\}$
                }
            }
            $Space = Space'$\;
        }

        \For(\tcp*[f]{verifying IS-conditions by computing HT-models})
            {$sp \in Space$}{
            \For{$ISR \subseteq sp[2]$ s.t. $|ISR| = sp[3]$}{
            	$C$ = \VCMS{$k, m, n, sp[1] \cup ISR, sp[1] \cup ISR$}\;
                \lIf{$C$ is not None}{
                    $MGIC_i = MGIC_i \cup \{ C \}$
                }\lElse{
                    $Max = |\sis{T}|$, $MNSE = MNSE \cup \{ C \}$
                }
            }
        }

        $MGIC = MGIC \cup MGIC_i$\;
        \If(\tcp*[f]{early terminating criterion})
        {$MGIC_i = \emptyset$}{
            $TR = i$, return $MGIC$, $TR$\;
        }
    }
\end{algorithm}

Combining above four kinds of improvements methods, 
an improved method to automatically search \kmn SE-conditions is shown in Algorithm \ref{alg:finding-se-condition-improved}. 
In Line \ref{line:alg3-subspace}, we use a triple $S = \langle ISL,~ \overline{IS_4}, ~i - |ISL| \rangle$ to record a searching subspace. 
The set $IC_S$ of singleton IS-conditions in the subspace $S$ is 
\begin{equation}
IC_S = \{ IC(T) ~|~ ISR \subseteq \overline{IS_4}, ~ |ISR| = i - |ISL|, \text{ and } \sis{T} = ISR \cup ISL\}
\end{equation}
In Line 16 - 26, we use $S[i]$ $(1 \leq i \leq 3)$ to denote the $i$-th element of the subspace triple $S$. 
In what follows, 
we use $MGIC(k, m, n)$ and $MNSE(k, m, n)$ to denote the sets of the \kmn SE-conditions  and the minimal \kmn non-SE-conditions discovered by Algorithm \ref{alg:finding-se-condition-improved}, respectively.


\subsection{Experimental Results}
Now we compare the running times of Algorithm \ref{alg:finding-se-condition} and \ref{alg:finding-se-condition-improved} in solving \kmn problems such that $k + m +n \leq 3$. 
The algorithms were carried out on three servers with Intel Xeon E5-2687W CPU and 100 GB RAM running Ubuntu 16.04\footnote{
An implementation of the searching algorithms and the discovered SE-conditions can be found at \url{https://github.com/wangbiu/lpmln_isets}.
}. 

\begin{table}
    \centering
    \caption{Running Times of Solving \kmn Problems}
    \label{tab:kmn-running-times}
    \resizebox{\columnwidth}{!}{%
    \begin{tabular}{cccccccccc}
        \hline
        & $|IS|$ & $|IS'|$ & $|IS''|$ & $TR$ & $|MGIC|$ & $|MNSE|$ & $Max$ & Alg. \ref{alg:finding-se-condition} & Alg. \ref{alg:finding-se-condition-improved} \\
        \hline
        0-1-0 & 7 & n/a & n/a & 7 & 120 & 1 & 1 & $<$ 1 s & n/a \\
        0-1-1 & 63 & 24 & 16 & 7 & 32 & 18 & 2 & 3 h 13 m & 21 s \\
        1-1-0 & 63 & 24 & 20 & 12 & 1024 & 13 & 2 & 8 h 2 m & 54 s \\
        0-2-1 & 511 & 63 & 33 & 7 & 60 & 71 & 3 & n/a  & 35 s\\
        1-2-0 & 511 & 63 & 42 & 15 & 10240 & 81 & 3 & n/a & 15 m 32 s \\
        1-1-1 & 511 & 63 & 45 & 16 & 39392 & 409 & 3 & n/a & 39 m 16 s \\
        2-1-0 & 511 & 63 & 54 & $\leq$ 28 & $\leq$ 249913344 & $\geq$ 984 & $\geq$ 3 & n/a & 25 h 37 m \\
        \hline
    \end{tabular}
	}
    \end{table}

Table \ref{tab:kmn-running-times} shows the running records of Algorithm \ref{alg:finding-se-condition} and \ref{alg:finding-se-condition-improved}, 
where the columns Alg. \ref{alg:finding-se-condition} and Alg. \ref{alg:finding-se-condition-improved} show the running times of solving a \kmn problem via using Algorithm \ref{alg:finding-se-condition} and Algorithm \ref{alg:finding-se-condition-improved} respectively, 
other columns show some important data in Algorithm \ref{alg:finding-se-condition-improved}, 
which are as follows
\begin{itemize}
    \item $|IS|$ (Line 2) is the number of the independent sets w.r.t. a \kmn problem; 
    \item $|IS'|$ (Line 2)  is the  number of non-I-3, non-I-4, and non-I7 independent sets of $IS$;
    \item $|IS''|$ (Line 6)  is the number of independent sets of $IS'$ that can construct \kmn SE-conditions;
    \item $TR$ (Line 33)  is the terminating layer of a \kmn problem, i.e. for any \kmn IS-condition $IC(T)$ such that $|\sis{T}| = TR$, $IC(T)$ is not a \kmn SE-condition; 
    \item $|MGIC|$ (Line 31) is the number of the most general SE-conditions of a \kmn problem; 
    \item $|MNSE|$ (Line 30) is the number of the minimal non-SE-conditions of a \kmn problem;  
    \item $Max$ (Line 30) is the maximal number of singleton independent sets w.r.t. a minimal non-SE-condition, 
    i.e. $Max = |\sis{T}|$, where $IC(T) \in MNSE$ and for any $IC(T') \in MNSE$, $|\sis{T'}| \leq |\sis{T}|$.
\end{itemize}
Note that for the \kmn problems such that $k + m + n > 2$, the set $IS'$ does not contain I-5 independent sets, 
which is an application of the 0-1-1 SE-conditions. 
For the \kmn problems such that $k + m + n > 1$, 
their searching spaces in Algorithm \ref{alg:finding-se-condition} are constructed from $IS'$, 
since original Algorithm \ref{alg:finding-se-condition} can only be used to solve the 0-1-0 problem. 

For the 2-1-0 problem, there are too many IS-conditions that need to be verified. 
And for the IS-conditions consisting of more than 20 non-empty independent sets, 
it usually takes a very long time to verify them by using Algorithm \ref{alg:compute-mg-is-condition}. 
Therefore, for 2-1-0 singleton IS-condition $IC(T)$ such that $|\sis{T}| > 3$, 
we skip the verification, 
which means there are at most 249913344 2-1-0 SE-conditions and at least 984 minimal 2-1-0 non-SE-conditions. 
Except for the 2-1-0 problem, the other \kmn problems in Table \ref{tab:kmn-running-times} are called  verified \kmn problems. 

For the verified \kmn problems, 
Algorithm \ref{alg:finding-se-condition-improved} can be done in a reasonable amount of time, 
and corresponding SE-conditions are reported in next section.


\section{SE-conditions of Verified \kmn Problems}
In this section, we report the discovered SE-conditions of the verified \kmn problems. 
Firstly, we present a preliminary approach to simplifying the SE-conditions. 
Secondly, we report the simplified SE-conditions of the verified \kmn problems. 
Finally, we present a discussion on the discovered \kmn SE-conditions.

\subsection{Simplification of \kmn SE-conditions}
As shown in Table \ref{tab:kmn-running-times}, there are many SE-conditions of a \kmn problem, 
which is not convenient in practical use. 
For the 0-1-0 problem, 
Equation \eqref{eq:010-se-condition-concise} shows that the discovered 120 SE-conditions can be simplified, 
which is simplified by a symbolic computing toolkit such as SymPy \cite{Meurer2017Sympy}. 
But for other \kmn problems, there are too many symbols in SE-conditions, 
which makes SymPy not applicable. 
Therefore, we need to study how to simplify SE-conditions of a \kmn problem.

For a set $IC$ of \kmn SE-conditions, by $CIS_e(IC)$ and $CIS_n(IC)$, we denote the set of names of common empty and non-empty independent sets of the conditions in $IC$, respectively, which are 
\begin{equation}
CIS_\triangle(IC) = \bigcap_{IC(T) \in IC} IS_\triangle(T), \text{ where } \triangle \in \{e, n\}
\end{equation}
The notion of clique for a set of \kmn SE-conditions is defined as follows.

\begin{definition}[Clique]
A set $IC$ of \kmn SE-conditions is called a clique, if both of the following conditions are satisfied
\begin{itemize}
    \item there exists an SE-condition $IC(T)$ of $IC$ such that for any other SE-condition $IC(T')$ of $IC$, 
    $\sis{T'} =  \nis{T'} \cap \sis{T}$ and $\nis{T'} \subsetneq \nis{T}$; 
    \item $|IC| = 2 ^ x$, where $x = |IS(k, m, n)| - |CIS_e(IC)| - |CIS_n(IC)|$. 
\end{itemize}
And the SE-condition $IC(T)$ is called the max-SE-condition of $IC$. 
\end{definition}
For a clique $IC$ of \kmn SE-conditions and its max-SE-condition $IC(T)$, 
it is easy to check that the clique $IC$ can be simplified as 
\begin{equation}
    Sim(IC) = \bigwedge_{i \in CIS_n(IC)} \expneis{I_i} \wedge \bigwedge_{j \in CIS_e(IC)} \expeis{I_j} \wedge 
    \bigwedge_{k \in \sis{T}} (|I_k| \leq 1)
\end{equation}
A clique $IC$ of \kmn SE-conditions is called a maximal clique of $MGIC(k, m, n)$, 
if $IC \subseteq MGIC(k, m, n)$ and there does not exist a clique such that $IC' \subseteq MGIC(k, m, n)$  and  $IC \subsetneq IC'$. 
Based on the above notions, simplifying \kmn SE-conditions of $MGIC(k, m, n)$ is turned into finding the maximal cliques of $MGIC(k, m, n)$. 
For a \kmn problem, by $MC(k, m, n)$, we denote the set of maximal cliques of $MGIC(k, m, n)$, 
the simplified \kmn SE-condition is 
\begin{equation}
\bigvee_{IC \in MC(k, m, n)} Sim(IC)
\end{equation}

\begin{example}
Consider following SE-conditions 
\begin{align*}
    \expneis{I_1} \wedge \expneis{I_2} \wedge \expneis{I_3}  \tag{IC1} \\
    \expneis{I_1} \wedge \expeis{I_2} \wedge \expneis{I_3}  \tag{IC2} \\
    \expneis{I_1} \wedge \expneis{I_2} \wedge \expeis{I_3}  \tag{IC3} 
\end{align*} 
It is easy to check that there are two maximal cliques in the above three SE-conditions, 
i.e. $MC_1 = \{IC1, ~IC2\}$ and $MC_2 = \{IC1, ~IC3\}$. 
Therefore, the simplified SE-condition is 
\begin{equation}
    \expneis{I_1} \wedge \left(\expneis{I_2} \vee  \expneis{I_3}\right)
\end{equation}
\end{example}

Usually, some \kmn SE-conditions in $MGIC(k, m, n)$ consist of singleton independent sets, 
by $IC_s(k, m, n)$, we denote the set of names of singleton independent sets of all \kmn  SE-conditions, which is 
\begin{equation}
    IC_s(k, m, n) = \bigcup_{IC(T) \in MGIC(k, m, n)} \sis{T}
\end{equation}
A subset $IC$ of $MGIC(k, m, n)$ is called singleton independent set irrelevant (SIS-irrelevant for short), 
if one of following conditions is satisfied
\begin{itemize}
    \item for any SE-condition $IC(T) \in IC$, $\sis{T} = \emptyset$; or
    \item for any SE-condition $IC(T) \in IC$, $\nis{T} \cap IC_s(k, m, n) = \sis{T}$.
\end{itemize}
It is easy to check that only an SIS-irrelevant subset of $MGIC(k, m, n)$ could be a clique. 
Therefore, to simplify the \kmn SE-conditions, we divide the set $MGIC(k, m, n)$ into several maximal SIS-irrelevant subsets. 
For each maximal SIS-irrelevant subset of $MGIC(k, m, n)$, Algorithm \ref{alg:finding-max-cliques} provides a preliminary method to find maximal cliques of the subset. 

\begin{algorithm}
\caption{Finding Maximal Cliques} \label{alg:finding-max-cliques}
\KwIn{An SIS-irrelevant subset of $MGIC(k, m, n)$: $IC$}
\KwOut{Maximal Cliques of $IC$: $MaxCliques$}
$MaxIC = \{IC(T) \in IC ~|~ \forall IC(T') \in IC, ~ \nis{T} \not\subseteq \nis{T'}\}$, $Cliques = \emptyset$\;
\For{$IC(T) \in MaxIC$}{
    $IC' = \{ IC(T') \in IC ~|~ \nis{T'} \subseteq \nis{T} \}$, $NIS = \bigcup_{IC(T') \in IC'} \nis{T'}$, $IC''= \emptyset$\; 
    \For{$NIS' \subsetneq NIS$}{
        $CQ_1 = \{ IC(T') \in IC' ~|~  NIS' \subseteq \nis{T'}\}$, 
        $CQ_2 = \{ IC(T') \in IC' ~|~  NIS' \subseteq \eis{T'}\}$\;
        \For{$1 \leq i \leq 2$}{
            $MaxIC' = \{IC(T') \in IC' ~|~ \forall IC(T'') \in IC', ~ \nis{T'} \not\subseteq \nis{T''}\}$\;
            \For{$IC(T') \in MaxIC'$}{
                $CQ'_i = \{ IC(T'') \in CQ_i ~|~ \nis{T''} \subseteq \nis{T'}  \}$\;
                \If{$CQ'_i$ is a clique}{
                    $Cliques = Cliques \cup \{ CQ'_i \}$, $IC'' = IC'' \cup CQ'_i$\;
                }
            }
        }
        \If{$ IC' \subseteq IC''$}{
            break\;
        }
    }
}

$MaxCliques = \{ CQ \in Cliques ~|~ \forall CQ' \in Cliques, ~ CQ \not\subseteq CQ'\}$\;
return $MaxCliques$\;
\end{algorithm}


\subsection{Simplified \kmn SE-Conditions}
Now we report the simplified SE-conditions of the verified \kmn problems\footnote{
The discovered and simplified SE-conditions can be found at \url{https://github.com/wangbiu/lpmln_isets/tree/master/experimental-results/lpmln-se-conditions}.
}. 
For the 0-1-1 and 1-1-0 problems, we report the simplified SE-conditions such that I-3, I-6, and I-7 independent sets are empty. 
Let $IS_{011} = \{36, 9, 13, 18, 41, 45\}$ and $IS_{110} = IS_{011} \cup \{1, 2, 5, 33, 37\}$, 
the simplified SE-condition of the 0-1-1 problem is 
\begin{equation}
\label{eq:011-se-condition}
(I_{36} \neq \emptyset) \wedge \bigwedge_{k \in IS(0, 1, 1) - IS_{011}} (I_k = \emptyset)
\end{equation}
and the simplified SE-condition of the 1-1-0 problem is 
\begin{equation}
\label{eq:110-se-condition}
(I_{36} \neq \emptyset) \wedge \bigwedge_{k \in IS(1, 1, 0) - IS_{110}} (I_k = \emptyset)
\end{equation}

For the 0-2-1, 1-2-0, and 1-1-1 problems, we report the simplified SE-conditions  such that I-3, I-5, I-6, and I-7 independent sets are empty. 
Let $IS_{021} = \{8, 16, 64, 73, 100, 128, 146, 268, 292\}$, 
the simplified SE-condition of the 0-2-1 problem is 
\begin{equation}
\label{eq:021-se-condition}
\left(IC^1_{021} \vee IC^2_{021} \right) \wedge \expneis{I_{292}} \wedge \bigwedge_{k \in IS(0, 2, 1) - IS_{021}} \expeis{I_k} 
\end{equation}
where the formulas $IC^1_{021}$ and $IC^2_{021}$ are 
\begin{eqnarray}
& IC^1_{021} = \expeis{I_{8}} \wedge \expeis{I_{16}} \wedge \expeis{I_{268}} \\
& IC^2_{021} = \expeis{I_{64}} \wedge \expeis{I_{100}} \wedge \expeis{I_{128}}
\end{eqnarray}
Let $IS_{120} = \{1, 2, 8, 9, 10, 16, 17, 18, 73, 146, 265, 268, 289, 292\}$, 
the simplified SE-condition of the 1-2-0 problem is 
\begin{equation}
\label{eq:1-2-0-se-condition}
\left(  \expneis{I_{292}} \vee \left( \expeis{I_{292}} \wedge \expneis{I_{268}}  \wedge \expneis{I_{289}}  \right) \right)  \wedge \bigwedge_{k \in IS(1, 2, 0) - IS_{120}} \expeis{I_k}
\end{equation}

For the 1-1-1 problem, there are 19 maximal cliques of $MGIC(1, 1, 1)$. 
For brevity, we only show the simplified SE-conditions containing singleton independent sets. 
Let $IS_{111} = \{9, 18, 73,$ $ 146, 258, 265, 272, 292\}$, 
the simplified SE-conditions containing singleton independent sets are 
\begin{eqnarray}
& |I_{258}| \leq 1 \wedge \expeis{I_{272}} \wedge \expneis{I_{292}} \wedge \bigwedge_{e \in IS(1, 1, 1) - IS_{111}} \expeis{I_{e}} \\
& \expeis{I_{258}} \wedge |I_{272}| \leq 1 \wedge \expneis{I_{292}} \wedge \bigwedge_{e \in IS(1, 1, 1) - IS_{111}} \expeis{I_{e}}
\end{eqnarray}


\subsection{Discussion}
In this subsection, we present a discussion for discovered \kmn SE-conditions from three aspects: 
(1) the max-min \kmn non-SE-conditions; 
(2) the most general \kmn SE-conditions containing singleton independent sets (\kmn MGS-SE-conditions for short); 
and (3) the simplifications of \kmn  SE-conditions.

\textbf{About the Max-Min Non-SE-Conditions and the MGS-SE-Conditions.}
From the verified \kmn problems and the verified 2-1-0 SE-conditions, 
there are two interesting facts w.r.t. the max-min \kmn non-SE-conditions and the \kmn MGS-SE-conditions, 
which are shown in Fact \ref{ph:max-min-non-se-conditions} and  \ref{ph:singleton-isets-se-conditions}. 
\begin{fact}
\label{ph:max-min-non-se-conditions}
For a max-min non-SE-condition $IC(T)$ of a verified \kmn problem, 
it can be observed that $|\sis{T}| = k + m +n$.
\end{fact}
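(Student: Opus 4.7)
The plan is to establish the equality $|\sis{T}| = k + m + n$ for any max-min non-SE-condition by proving two directions: an upper bound showing that no minimal non-SE-condition can have strictly more than $k + m + n$ singleton independent sets, and a constructive existence result showing that max-min non-SE-conditions with exactly $k + m + n$ singleton independent sets occur for every verified \kmn problem.

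For the upper bound, I would argue via a pigeonhole-style decomposition over the $k+m+n$ rules in $T$. Suppose $IC(T)$ is a minimal non-SE-condition with $|\sis{T}| > k + m + n$. Because each non-empty independent set is composed across all rules via Equation~\eqref{eq:iset-composition}, some rule $r \in T$ must be ``covered'' by two or more distinct singleton independent sets $I_i, I_j \in \sis{T}$. The key step is to show that one such singleton can be eliminated by an S-RD-1 transformation while the resulting tuple remains an NSV-tuple; by Theorem~\ref{thm:s-rd1-ex0-nsv-is-condition}, this reduction preserves non-semi-strong equivalence, yielding a strictly smaller non-SE-condition in the $\subset$ ordering and contradicting the minimality of $IC(T)$.

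For the existence direction, I would construct, for each verified \kmn problem, an explicit pair of \lpmln programs exhibiting $k+m+n$ essential singleton independent sets. A natural template is the style of witnesses already used inside the proof of Lemma~\ref{lem:s-dl-nse-preserving} and in Example~\ref{ex:s-rd-2}: for each rule, install exactly one ``private'' singleton independent set whose deletion restores semi-strong equivalence, while the other rules' private singletons continue to witness non-equivalence through an HT-interpretation. Verifying (i) that the configuration fails to be semi-strongly equivalent and (ii) that it is minimal in the $\subset$ ordering is then a direct HT-model computation via Theorem~\ref{thm:se-ht-models}.

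The main obstacle will be the upper bound. The existence portion is essentially constructive and can be verified mechanically by HT-model enumeration, but showing that \emph{every} minimal non-SE-condition respects the $k+m+n$ cap requires a global combinatorial invariant about how singleton independent sets distribute across rules. The NSE-preserving properties of the S-* transformations (Theorems~\ref{thm:s-dl-se-nse-preserving} and~\ref{thm:s-ad-se-nse-preserving}) provide strong reduction tools only for independent sets of size at least $2$ or $3$, and S-RD-1 is NSE-preserving only on NSV-tuples (Theorem~\ref{thm:s-rd1-ex0-nsv-is-condition}). Consequently, one must argue carefully that whenever a rule is over-covered by singletons, the distinguishing HT-interpretation indeed survives the proposed S-RD-1 step without flipping satisfaction of the partner program. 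Since the claim is stated by the authors as an empirical observation, it is plausible that a fully general proof will require an additional structural invariant beyond the S-* machinery developed so far.
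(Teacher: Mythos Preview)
The paper does not prove this statement at all. Fact~\ref{ph:max-min-non-se-conditions} is an \emph{empirical observation} extracted from the experimental runs: the column $Max$ in Table~\ref{tab:kmn-running-times} records, for each verified \kmn problem, the largest $|\sis{T}|$ among the minimal non-SE-conditions returned by Algorithm~\ref{alg:finding-se-condition-improved}, and one simply reads off that $Max = k+m+n$ in every completed row. The paper then explicitly says that if this fact held in general it would amount to the (currently unestablished) $C$-NSE-preserving property of S-RD-1, and calls this ``still an open problem.'' So there is nothing for your proof to match against; you are attempting a theoretical argument for something the authors only observe computationally and leave open.

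Your proposed upper-bound argument contains a concrete misreading of Theorem~\ref{thm:s-rd1-ex0-nsv-is-condition}. That theorem gives, for NSV-tuples, that S-EX-0 is NSE-preserving and (equivalently, by contraposition) that S-RD-1 is \emph{SE}-preserving. It does \emph{not} say that S-RD-1 is NSE-preserving on NSV-tuples. The statement ``this reduction preserves non-semi-strong equivalence'' is therefore unjustified: starting from a non-SE tuple and deleting a singleton via S-RD-1 may well produce an SE tuple, even among NSV-tuples. Indeed, the paper's own discussion makes precisely this point---the NSE-preservation of S-RD-1 is exactly the hypothetical ``$C$-NSE-preserving'' property whose truth would \emph{imply} Fact~\ref{ph:max-min-non-se-conditions}, not something already available to prove it. Your closing caveat anticipates the difficulty, but the body of the proposal invokes the wrong direction of the theorem, so the pigeonhole reduction step does not go through as written.
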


\begin{fact}
\label{ph:singleton-isets-se-conditions}
For an MGS-SE-condition $IC(T)$ of a verified \kmn problem that $IC(T) \in MGIC(k, m, n)$ and $|\nis{T}| > 2$ and a singleton independent set $I_i$ that  $i \in \sis{T}$, 
there exists an MGS-SE-condition $IC(T')$ that $IC(T') \in MGIC(k, m, n)$, $i \in \sis{T'}$, and $|\nis{T'}| = 2$. 
\end{fact}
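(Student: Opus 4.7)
Fact~\ref{ph:singleton-isets-se-conditions} is an empirical observation over the finite, explicitly enumerated sets $MGIC(k,m,n)$ produced by Algorithm~\ref{alg:finding-se-condition-improved}, so the primary verification is a direct sweep: for every verified \kmn problem in Table~\ref{tab:kmn-running-times}, iterate through $MGIC(k,m,n)$, select each $IC(T)$ with $|\nis{T}|>2$, and for every singleton index $i\in\sis{T}$ search the stored collection for some $IC(T')\in MGIC(k,m,n)$ with $i\in\sis{T'}$ and $|\nis{T'}|=2$. This check is mechanical and is exactly what Fact~\ref{ph:singleton-isets-se-conditions} asserts over the verified problems.

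For a structural explanation, the plan is to construct a candidate $T'$ from $T$ by the SE-preserving S-* transformations of Section~3. Fix a second index $j\in\nis{T}\setminus\{i\}$ and, for every other non-empty independent set $I_\ell$ of $T$, first shrink it to a singleton by repeated S-DL applications followed by a single S-RD-2 (both SE-preserving, by Theorem~\ref{thm:s-dl-se-nse-preserving} and Theorem~\ref{thm:s-rd}) and then empty it with S-RD-1. Since $IC(T)$ is NSV (all I-3, I-6, I-7 independent sets are empty) and none of these transformations introduce an I-3, I-6, or I-7 independent set, the NSV property is preserved throughout, so Theorem~\ref{thm:s-rd1-ex0-nsv-is-condition} guarantees that each S-RD-1 step remains SE-preserving. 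The resulting tuple $T''$ satisfies $\nis{T''}=\{i,j\}$ with $I_i$ still singleton, and $IC(T'')$ is an SE-condition; applying Corollary~\ref{cor:compute-mg-se-condition} to $T''$ then promotes it to an MGS-SE-condition $IC(T')$ with $\nis{T'}=\{i,j\}$.

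The main obstacle is showing $i\in\sis{T'}$, i.e., that the singleton maximality of $I_i$ survives the reduction: although $i\in\sis{T}$ forces the S-EX-1 extension of $I_i$ inside $T$ to break SE, the analogous extension inside the emptied $T''$ might happen to preserve SE once the other non-empty sets have been collapsed. The plan for handling this is to let the choice of $j$ depend on which i-sets encode $i$, picking $j$ so that $I_j$ acts as a witness forcing the S-EX-1 extension of $I_i$ in $T''$ to fail SE. Whether such a $j$ exists for every $T$ and every $i\in\sis{T}$ is open in general; Fact~\ref{ph:singleton-isets-se-conditions} is asserted only for the verified problems, where the sweep in the first paragraph settles it without invoking the structural argument.
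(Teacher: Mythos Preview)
Your proposal is correct and matches the paper's treatment. The paper states Fact~\ref{ph:singleton-isets-se-conditions} as an empirical observation extracted from the computed sets $MGIC(k,m,n)$ for the verified \kmn problems, with no formal proof; your first paragraph is exactly this direct sweep over the stored data. The paper then offers a heuristic explanation---that Fact~\ref{ph:singleton-isets-se-conditions} would follow from a hypothetical ``$C$-NSE-preserving'' property of the S-RD-1 transformation---and explicitly flags this property as an open problem, which parallels the structural sketch and the acknowledged obstacle in your second and third paragraphs.

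One small remark on your structural sketch: the reduction of $T$ to a two-set tuple $T''$ via S-DL, S-RD-2, and S-RD-1 (all SE-preserving on NSV-tuples) is indeed routine and always yields an SE-condition; the entire content of the Fact lies in the obstacle you isolate, namely that the singleton constraint on $I_i$ remains tight at the two-set level. The paper's hypothetical $C$-NSE-preserving property is precisely what would be needed to push a witness of non-SE for $\Gamma^\oplus_1(T,I_i,a')$ down to $\Gamma^\oplus_1(T'',I_i,a')$, so your diagnosis of the gap coincides with the paper's. Note also that your candidate $IC(T')$ need not land in $MGIC(k,m,n)$ as literally output by Algorithm~\ref{alg:finding-se-condition-improved} if the collapsed $T''$ acquires a semi-valid rule (i.e.\ falls into $SIC^2(k,m,n)$), so the choice of $j$ would also have to avoid that; this is another reason the structural route does not close without the empirical sweep.
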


Fact \ref{ph:max-min-non-se-conditions} and \ref{ph:singleton-isets-se-conditions} are important in two-fold. 
On the one hand, if these two facts hold in any \kmn problems, there is a major improvement for the searching algorithms of \kmn problems. 

By Fact \ref{ph:max-min-non-se-conditions}, the singleton non-SE-conditions of a \kmn problem can be obtained by verifying a small amount of IS-conditions. 
Specifically, the set $MNSE(k, m, n)$ can be directly computed by Equation \eqref{eq:mnse-quick}. 
\begin{equation}
\label{eq:mnse-quick}
\begin{split}
    MNSE(k, m, n) = \{ IC(T) \in NIC_x(k, m, n) ~|~ & 1 \leq x \leq k+m+n, T = \langle K, M, N \rangle, \text{ and }  \\
    & \lse{K \cup M} \neq \lse{K \cup N}
    \}
\end{split}
\end{equation}
For other IS-conditions $IC(T) \in NIC_x(k, m, n)$ such that $x > k+m+n$, 
by the NSE-preserving property of the S-EX-0 transformation, 
if there is an IS-condition $IC(T') \in MNSE(k, m, n)$ such that $\sis{T'} \subsetneq \sis{T}$, $IC(T)$ is not an SE-condition; otherwise, $IC(T)$ is an SE-condition. 
By $MGIC'(k, m, n)$, we denote the set of \kmn singleton SE-conditions $IC(T)$ such that $|\sis{T}| > k+m+n$, 
we have 
\begin{equation}
\begin{split}
MGIC'(k, m, n) = 
\{ IC(T) \in NIC_x(k, m, n) ~& |~  x > k+m+n \text{ and }\\
& \forall IC(T') \in MNSE(k, m, n), ~\sis{T'} \not\subseteq \sis{T}\}
\end{split}
\end{equation}

Furthermore, by Fact \ref{ph:singleton-isets-se-conditions}, all the singleton independent sets occurred in MGS-SE-conditions have been recognized when verifying IS-conditions $IC(T)$ such that $|\nis{T}| = 2$. 
By $SIS(k, m, n)$, we denote the set of names of such kind of independent sets, we have 
\begin{equation}
SIS(k, m, n) = \{ i ~|~ IC(T) \in NIC_2(k, m, n) \cap MGIC(k, m, n) \text{ and } i \in \sis{T} \}
\end{equation}

Combining the above facts, for a singleton IS-condition $IC(T) \in MGIC'(k, m, n)$, the MGS-SE-condition $MGIC(T')$ w.r.t. $IC(T)$ can be constructed as follows
\begin{equation}
MGIC(T') = \bigwedge_{i \in \nis{T}} \expneis{I_i} \wedge \bigwedge_{e \in \eis{T}} \expeis{I_e} \wedge \bigwedge_{s \in \nis{T} \cap SIS(k, m, n)} |I_s| = 1
\end{equation}
In other words, for \kmn problems such that $k+m+n > 2$, the IS-conditions in $MGIC'(k, m, n)$ need not be verified by Algorithm \ref{alg:compute-mg-is-condition}. 
Since Algorithm \ref{alg:compute-mg-is-condition} is the hardest part of the searching algorithms of \kmn problems, 
if Fact \ref{ph:max-min-non-se-conditions} and \ref{ph:singleton-isets-se-conditions} are true in any \kmn problems, 
the above discussion implies a major improvement for these searching algorithms.

On the other hand, Fact \ref{ph:max-min-non-se-conditions} and \ref{ph:singleton-isets-se-conditions} essentially imply the NSE-preserving property of the S-RD-1 transformation and the SE-preserving property of the S-EX-0 transformation under some unknown conditions. 
Assume under a condition $C$, the S-RD-1 transformation is NSE-preserving. 
Since the S-EX-0 transformation is the inverse of the S-RD-1 transformation, the S-EX-0 transformation is SE-preserving under the condition $C$. 
We call the properties $C$-NSE-preserving and $C$-SE-preserving properties, respectively. 

For a singleton IS-conditions $IC(T)$ in $MGIC'(k, m, n)$, 
since for any singleton IS-condition $IC(T')$ in $NIC(k, m, n)$ such that $IC(T') \subset IC(T)$, $IC(T')$ is an SE-condition, 
by the $C$-SE-preserving property of the S-EX-0 transformation, $IC(T)$ should be an SE-condition. 
Therefore, Fact \ref{ph:max-min-non-se-conditions} is derived from the $C$-SE-preserving property of the S-EX-0 transformation. 

For a \kmn MGS-SE-condition $IC(T)$ and an independent set $I_i$ such that  $i \in \sis{T}$, 
we can construct a \kmn singleton tuple $T'$ such that $IC(T') < IC(T)$, 
and let $\sex{T'} = \opsex{T, I_i, a'}$. 
It is easy to check that $T'$ is semi-strongly equivalent but $\sex{T'}$ is not. 
For a singleton IS-condition $IC(T'') \in NIC_2(k, m, n)$ such that $IC(T'') \subset IC(T')$ and $i \in \sis{T''}$, 
by the SE-preserving property of the S-RD-1 transformations, 
it is easy to check that $IC(T'')$ is an SE-condition. 
And let $\sex{T''} = \opsex{T'', I_i, a'}$, 
by the $C$-NSE-preserving property of the S-RD-1 transformation, we have $\sex{T''}$ is not semi-strongly equivalent, 
i.e., $IC(T'')$ is not an SE-condition. 
Therefore, Fact \ref{ph:singleton-isets-se-conditions} is derived from the $C$-NSE-preserving property of the S-RD-1 transformation. 

The above discussion show the $C$-NSE-preserving and $C$-SE-preserving properties may provide more understandings of the semi-strong equivalence of \lpmln and provide a major improvement for the searching algorithms of \kmn problems. 
However, it is still an open problem whether the $C$-NSE-preserving and $C$-SE-preserving properties hold. 

\textbf{About Simplifications of the SE-conditions.}
In this paper, although we have presented a preliminary algorithm to simplify \kmn SE-conditions, 
there are still many problems on the simplifications of the SE-conditions. 


Firstly, Algorithm \ref{alg:finding-max-cliques} may not find the simplest \kmn SE-conditions. 
Since the approach can only process IS-conditions, while the simplest SE-condition may not be an IS-condition. 
As shown in Equation \eqref{eq:lpmln-010-se-condition}, the simplest 0-1-0 SE-condition of \lpmln programs is not a disjunction of IS-conditions.
But we can only obtain the 0-1-0 SE-condition shown in Equation \eqref{eq:010-se-condition-concise} by Algorithm \ref{alg:finding-max-cliques}. 

Secondly, the optimizing approaches used in Algorithm \ref{alg:finding-se-condition-improved} make the discovered SE-conditions not easy to simplify. 
Specifically, we define the proper subset $MGIC_{max}(k, m, n)$ of $MGIC(k, m, n)$ as follows 
\begin{equation}
\begin{split}
MGIC_{max}(k, m, n) = \{& IC(T) \in MGIC(k, m, n) ~|~ \sis{T} = \emptyset \text{ and } \\
& \forall IC(T') \in MGIC(k, m, n), ~ \nis{T} \not\subseteq \nis{T'}\}
\end{split}
\end{equation}
By the SE-preserving property of the S-RD-1 transformation, there should be a maximal clique w.r.t. each SE-condition of $MGIC_{max}(k, m, n)$, 
i.e., for an SE-condition $IC(T) 
\in MGIC_{max}(k, m, n)$, the set $IC = \{ IC(T') \in MGIC(k, m, n) ~|~ \nis{T'} \subseteq \nis{T} \}$ should be a maximal clique. 
But since the IS-conditions containing semi-valid rules are skipped in the searching processes, 
these conditions do not occur in $MGIC(k, m, n)$, i.e., the set $IC$ is usually not a maximal clique. 

In addition, the rules in a tuple are allowed to be the same in this paper. 
Although the searching algorithms may be improved by skipping the IS-conditions containing the same rules, 
it would make the simplified SE-conditions more complex. 
For example, if we eliminate the SE-conditions containing the same rules, the simplified 0-1-1 SE-condition is 
turned into 
\begin{equation}
\label{eq:011-se-no-same-rules}
\expneis{I_{36}} \wedge \left( \expneis{I_{13}} \vee  \expneis{I_{41}} \right)  \wedge \bigwedge_{k \in IS(0, 1, 1) - IS_{011}} \expeis{I_k}
\end{equation}
Obviously, it is more complex than the 0-1-1 SE-condition in Equation \eqref{eq:011-se-condition}. 

Combining the above discussion, how to optimize the searching algorithms of \kmn problems and simplify discovered SE-conditions are still open problems, 
which would be a future work of the paper. 


\section{Comparison with Lin and Chen's Approach}
Lin and Chen \cite{Lin2005Discover}  have presented an approach to discovering the \kmn SE-conditions of ASP programs. 
In this section, we compare the  Lin and Chen's approach (LC-approach for short) with our independent sets approach (IS-approach for short). 
Firstly, both of the approaches are presented to discover the \kmn SE-conditions of logic programs. 
For the IS-approach, we have shown that it can be used in both ASP and \lpmlnend. 
And for the LC-approach, although it is only used in ASP, 
it is not difficult to adapt the approach to \lpmlnend, which is shown in what follows. 

Secondly, we show the differences between the LC-approach and the IS-approach.
In the LC-approach, there are four main steps to discover necessary \kmn SE-conditions:
\begin{itemize}
    \item G-step: choose a small set of atoms and generate all strongly equivalent \kmn tuples constructed from the set;
    \item C-step: conjecture a plausible \kmn syntactic condition manually; 
    \item V$_1$-step: verify the conjecture in the generated \kmn tuples automatically; 
    \item V$_2$-step: verify the conjecture in the general cases manually.
\end{itemize}
In the C-step, the LC-approach has to conjecture a syntactic condition manually, 
since there does not exist a general theorem that guarantees the forms of SE-conditions. 
In the V$_1$ step, the LC-approach can automatically verify the conjecture in the generated \kmn tuples, 
since the strong equivalence checking of ASP can be reduced to the tautology checking of a propositional formula. 
In the V$_2$ step, the LC-approach need to show the conjecture holds for arbitrary \kmn tuples, 
where the necessary part of the verification cannot be done automatically. 
A main reason is that the forms of SE-conditions are unknown. 
It is obvious that, to adapt the LC-approach to \lpmlnend, we only need to consider the V$_1$ step. 
In other words, we need to find a reduction from the semi-strong equivalence checking of \lpmln to the tautology checking of a propositional formula. 
Fortunately, the reduction has been presented in \cite{Wang2019LPMLNSEALL}, 
therefore, the LC-approach can be straightforwardly used in \lpmlnend.

In the IS-approach, there are only two steps to discover necessary SE-conditions:
\begin{itemize}
    \item G-Step: generate an IS-condition of a \kmn problem; 
    \item V-Step: verify whether the IS-condition is a \kmn SE-condition and compute the most general SE-condition. 
\end{itemize}
For a \kmn problem, there are finitely many IS-conditions and each IS-condition can be verified by comparing the HT-models of related programs, which can be done automatically. 
Since we have shown that the form of SE-conditions is exactly the form of the IS-conditions, 
the discovered SE-conditions are straightforwardly necessary.

According to the above comparison, 
the main advantage of the IS-approach is it is a fully automatic approach, 
and it can be easily adapted to other logic formalisms by studying the SE-preserving and NSE-preserving properties of S-* transformations in corresponding logic formalisms.

\section{Conclusion}
In this paper, we present a syntactic approach to study the strong equivalences of ASP and \lpmln programs. 
Firstly, we present the notions of independent set and S-* transformations  
and show the SE-preserving and NSE-preserving properties of S-* transformations. 
Secondly, we present a basic algorithm to discover \kmn SE-conditions of logic programs, which is a fully automatic approach. 
To discover the SE-conditions efficiently, we present four kinds of improvements for the algorithm. 
Due to the same properties of S-* transformations in \lpmln and ASP, the discovering approaches can be used in ASP directly. 
Thirdly, we present a maximal-cliques-based method to simplify the discovered \kmn SE-conditions and 
report the simplified 0-1-0, 0-1-1, 1-1-0, 0-2-1, 1-2-0, and 1-1-1 SE-conditions. 
Moreover, we present a discussion on two interesting facts in the discovered SE-conditions and the problems in simplifications of SE-conditions. 
Finally, we present a comparison between Lin and Chen' and our approaches to discovering SE-conditions and 
discuss the similarity and differences between the notions of S-DL, S-RD, and HT-forgetting. 
By contrast, our approach is a fully automatic approach and is easy to adapt to other logic formalisms. 
A main problem of our approach is there are too many discovered SE-conditions, 
but we have not found a good method to simplify these SE-conditions.

For the future, we will continue to study the properties of S-* transformations and IS-conditions. 
Firstly, we will investigate whether the $C$-SE-preserving property of S-EX-0 and $C$-NSE-preserving property of S-RD-1 hold in any \kmn problems of \lpmlnend. 
Secondly, we will continue to study the simplifications of \kmn SE-conditions. 
Thirdly, we will investigate why the singleton independent sets are necessary to construct an SE-condition. 

In addition, we will investigate the applications of the independent sets and S-* transformations in other aspects of logic programming. 
We believe the notions of independent sets and S-* transformations will present some new perspective for studying theoretical properties of logic programs. 
For example, the forgetting is to hide (delete) irrelevant information from a logic program, 
and HT-forgetting can preserve strong equivalence after the hiding \cite{Wang2014HTForgetting}. 
It is easy to observe that the notion of HT-forgetting is similar to the notions of S-RD and S-DL transformations, 
that is, both of the notions are used to delete information from logic programs and the deleting preserves strong equivalence.

\bibliographystyle{fundam}
\bibliography{lpmln_isets.bib}




\end{document}